\newcommand\vldbpagestyle{plain} 
\pgfplotsset{compat=1.18}
\DeclareMathOperator*{\argmax}{arg\,max}
\newcommand\frbox[2][draw=black!30]{%
    \tikz[baseline]\node[%
        inner ysep=0pt, 
        inner xsep=2pt, 
        anchor=text, 
        rectangle, 
        rounded corners=1mm,
        #1] {\strut#2};%
}
\newcommand\appref[2][haque2024stochastic]{
\ifextended
  Appendix~\ref{#2}%
\else
  \cite[Appendix~\ref{#2}]{#1}\relax%
\fi
}
\newcommand{\algname}{\textit{Stochastic SketchRefine}}
\newcommand{\LCVaR}{\text{L-CVaR}\xspace}
\newcommand{\revision}[1]{{\color{red} #1}}
\definecolor{mygreen}{RGB}{2,112,10}
\definecolor{mygray}{rgb}{0.5,0.5,0.5}
\definecolor{mymauve}{rgb}{0.58,0,0.82}
\definecolor{byzantine}{rgb}{0.74, 0.2, 0.64}
\newcommand{\revbc}[1]{{\color{black}#1}}
\newcommand{\reva}[1]{{\color{black}#1}}
\newcommand{\revb}[1]{{\color{black}#1}}
\newcommand{\revc}[1]{{\color{black}#1}}
\newtcolorbox{expbox}[2][]{
    float=htb,
    blend before title=colon hang,
    sharp corners,
    colback=white,
    colbacktitle=blue!30!black,
    title={#2},
    every float=\centering,
    halign=flush center,
    halign lower=flush left,
    lower separated=false,
    #1}
\definecolor{commentgreen}{RGB}{2,112,10}
\newcommand{\ssearch}{\textsc{SummarySearch}\xspace}
\newcommand{\skr}{\textsc{SketchRefine}\xspace}
\newcommand{\cvarif}{\textsc{LinearCVarification}\xspace}
\newcommand{\lcsolve}{\textsc{RCL-Solve}\xspace}
\newcommand{\lcsolveS}{\textsc{RCL-Solve-S}\xspace}
\newcommand{\solvesketch}{\textsc{SolveSketch}\xspace}
\newcommand{\sskr}{\textsc{Stochastic SketchRefine}\xspace}
\newcommand{\dpart}{\textsc{DistPartition}\xspace}
\newcommand{\kd}{\textsc{KD-Trees}\xspace}
\newcommand{\pshading}{\textsc{Progressive Shading}\xspace}
\newcommand{\pvscan}{\textsc{PivotScan}\xspace}
\newcommand{\spaql}{\textsc{SPaQL}\xspace}
\newcommand{\Prob}{\mathbb{P}}
\newcommand{\omegaU}{{\overline{\omega}}}
\newcommand{\mult}{x}
\newcommand{\Pmax}{P_\text{max}}
\definecolor{dkgreen}{rgb}{0,0.6,0}
\definecolor{gray}{rgb}{0.5,0.5,0.5}
\definecolor{light-gray}{gray}{0.90}
\definecolor{mauve}{rgb}{0.58,0,0.82}
\tiny\color{gray},
\newtheoremstyle{mystyle}%
{3pt}
{3pt}
{\itshape\color{teal}}
{}
{\bfseries\color{teal}}
{.}
{.5em}
{}
\theoremstyle{mystyle}
\theoremstyle{definition}
\newif\ifextended
\begin{document}
\frenchspacing
\setcounter{section}{0}
\setcounter{page}{1}
\pagebreak
\title{\algname: Scaling In-Database Decision-Making under Uncertainty to Millions of Tuples}



\author{Riddho R.~Haque}
\affiliation{%
  \institution{University of Massachusetts Amherst}
  \country{}
}
\email{rhaque@cs.umass.edu}

\author{Anh L. Mai}
\affiliation{%
  \institution{NYU Abu Dhabi}
  \country{}
}
\email{anh.mai@nyu.edu}

\author{Matteo Brucato}
\affiliation{%
  \institution{Microsoft Research}
  \country{}
}
\email{mbrucato@microsoft.com}

\author{Azza Abouzied}
\affiliation{%
  \institution{NYU Abu Dhabi}
  \country{}
}
\email{azza@nyu.edu}

\author{Peter J. Haas}
\affiliation{%
  \institution{University of Massachusetts Amherst}
  \country{}
}
\email{phaas@cs.umass.edu}

\author{Alexandra Meliou}
\affiliation{%
  \institution{University of Massachusetts Amherst}
  \country{}
}
\email{ameli@cs.umass.edu}

\begin{abstract}
Decision making under uncertainty often requires choosing \emph{packages}, or bags of tuples, that collectively optimize expected outcomes while limiting risks. Processing \emph{Stochastic Package Queries} (SPQs) involves solving very large optimization problems on uncertain data. Monte Carlo methods create numerous \emph{scenarios}, or sample realizations of the stochastic attributes of \emph{all} the tuples, and generate packages with optimal objective values across these scenarios. The number of scenarios needed for accurate approximation---and hence the size of the optimization problem when using prior methods---increases with variance in the data, and the search space of the optimization problem increases exponentially with the number of tuples in the relation. Existing solvers take hours to process SPQs on large relations containing stochastic attributes with high variance. Besides enriching the SPaQL language to capture a broader class of risk specifications, we make two fundamental contributions toward scalable SPQ processing. First, we propose \emph{risk-constraint linearization} (RCL), which converts SPQs into Integer Linear Programs (ILPs) whose size is independent of the number of scenarios used. Solving these ILPs gives us feasible and near-optimal packages. Second, we propose \sskr, a divide and conquer framework that breaks down a large stochastic optimization problem into subproblems involving smaller subsets of tuples. Our experiments show that, together, RCL and \sskr produce high-quality packages in orders of magnitude lower runtime than the state of the art.

\end{abstract}

\maketitle

\pagestyle{\vldbpagestyle}



\vspace{-2mm}

\section{Introduction}\label{sec:intro}


\looseness-1
Many decision-making problems involve risk-constrained optimization over uncertain data~\cite{ahmed2008solving}.
Consider a stock portfolio optimization problem (Figure~\ref{fig:portfolio-builder}), where stock prices at future dates are uncertain and are simulated as stochastic processes (e.g., Geometric Brownian Motion~\cite{reddy2016simulating}). We choose which stocks to buy, how many shares of each, and when to sell them, with constraints on budget and risk of loss.
Such a problem can be expressed as a package query using \spaql---Stochastic Package Query Language---allowing for the benefits of in-database decision-making~\cite{BrucatoYAHM20}.
The example query in Figure~\ref{fig:portfolio-builder} requests a portfolio that costs less than \$1000 and for which the probability of losing more than \$10 is at most 5\%---the latter constraint is called a \emph{Value-at-Risk} (VaR) constraint.
The package result is a bag of tuples, i.e., a portfolio of stocks and a selling schedule, that satisfies the constraints while maximizing the expected gain.

\begin{figure}
    \centering
    \begin{minipage}{0.6\linewidth}
        {\scriptsize
        \begin{tabular}{lcccc}
\multicolumn{4}{l}{\bf Stock\_Investments}\\
\toprule
\multicolumn{1}{l}{\bf ID} & \multicolumn{1}{l}{\bf Stock} & \multicolumn{1}{l}{\bf Price} & \multicolumn{1}{l}{\bf Sell After} & \multicolumn{1}{l}{\bf Gain} \\
\midrule
1                      & AAPL              & 195.71                       & 0.5 days                       & ?                        \\

2                      & AAPL               & 195.71                       & 1 day                     & ?                        \\

3                      & AAPL               & 195.71                       & 1.5 days                     & ?                        \\
\ldots                    & \ldots                & \ldots                          & \ldots                       & \ldots                     \\
730                      & AAPL           & 195.71                       & 365 days                      & ?                        \\
731                      & MSFT              & 373.04                       & 0.5 days                      & ?                        \\
732                      & MSFT             & 373.04                       & 1 day                      & ?                        \\
\ldots                    & \ldots                & \ldots                          & \ldots                       & \ldots                     \\
1460                      & MSFT               & 373.04                       & 365 days                      & ? \\
\ldots                    & \ldots                & \ldots                          & \ldots                       & \ldots                     \\
\bottomrule
\end{tabular}
        }
    \end{minipage}
    \begin{minipage}{0.3\linewidth}
        {\scriptsize\textbf{Scenarios}}
        
        \includegraphics[width=\linewidth]{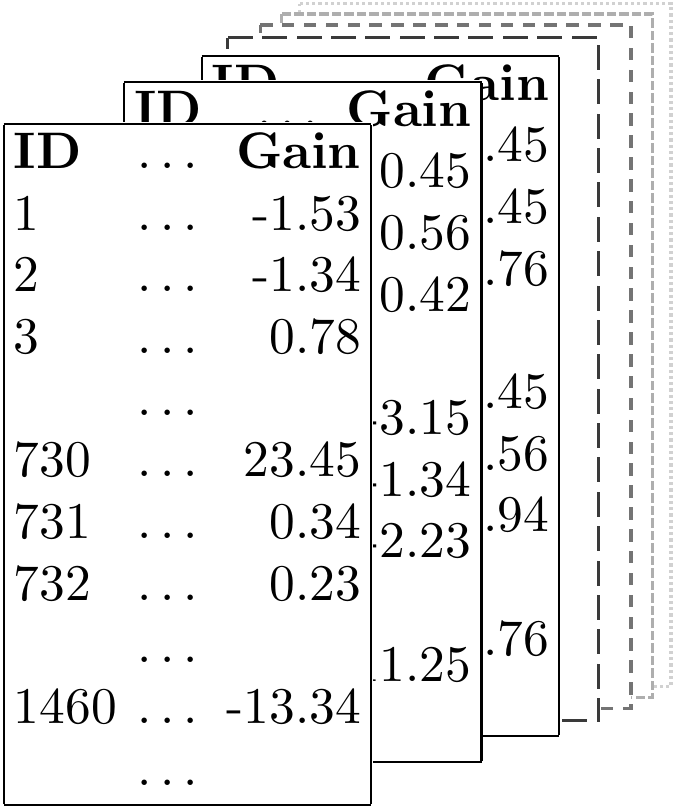}
    \end{minipage}

    \begin{framed}
    \vspace{-2mm}
    \begin{minipage}{0.48\linewidth}
    {\scriptsize\textbf{\spaql query}}
        {\scriptsize
        \input{queries/example_spaql_query}
        }
    \end{minipage}
    \begin{minipage}{0.18\linewidth}
        \begin{tikzpicture}
        \node[single arrow, draw=black, fill=white, 
          minimum width = 7pt, single arrow head extend=2pt,
          minimum height=10mm] {}; 
        \end{tikzpicture}
    \end{minipage}
    \begin{minipage}{0.3\linewidth}
        {\scriptsize
        \begin{tabular}{lcr}
\multicolumn{3}{l}{\bf Package result}\\
\toprule
\textbf{ID} & \ldots & \textbf{Count} \\
\midrule
3  & \ldots                       & 1                        \\
126  & \ldots                        & 2                        \\
1358 & \ldots                       & 1                        \\
2245 & \ldots                      & 1                        \\
\bottomrule
\end{tabular}
        }
    \end{minipage}
    \vspace{-4mm}
     \end{framed}
     \vspace{-3mm}
    \caption{The gain in the Stock\_Investments table is an uncertain attribute, simulated by stochastic processes.  The scenarios represent different simulations (possible worlds).  The example \spaql query contains a value-at-risk (VaR) constraint, specifying that the probability of total loss (negative gain) exceeding \$10 is at most 5\%.
    }
    \vspace{-5mm}
    \label{fig:portfolio-builder}
\end{figure}

To solve the stochastic optimization problem specified by a \spaql query, we approximate it by a deterministic problem via a Monte Carlo technique called Sample Average Approximation (SAA)~\cite{kim2015guide}. SAA creates numerous scenarios (``possible worlds''), each comprising a sample realization for every stochastic attribute of every tuple in the relation, e.g., a realized gain for every Stock--Sell\_After pair as in Figure~\ref{fig:portfolio-builder}. Such scenarios can be created using the Variable Generation (VG) functions of Monte Carlo databases like MCDB~\cite{JampaniXWPJH11}. 
\reva{
A VG function is a user-defined function that takes a table of input parameters and generates a table---a scenario---of sample values drawn from a corresponding probabilistic distribution. Because of their flexibility, VG functions can capture complex correlations across the sampled tuples.
In Figure~\ref{fig:portfolio-builder}, a VG function can use simulation models trained on historical stock prices to simulate discretized trajectories of future prices of every stock, and create scenarios of potential gains.} 

\reva{The scenarios are used to construct an SAA approximation to the original SPQ, where expectations are replaced by averages over scenarios and probabilities are replaced by empirical probabilities. E.g., the SAA approximation to the example query in Figure~\ref{fig:portfolio-builder} maximizes the average profit over all the scenarios, with at most 5\% of the scenarios having losses over \$10.}

Stochastic Package Queries (SPQs) are hard to scale on large datasets. They represent constrained optimization problems that grow along both:~(i)~\emph{the tuple dimension} and~(ii)~\emph{the scenario dimension}. More tuples directly correspond to more decision variables in an SPQ, causing the search space of the optimization problem---and correspondingly the solver runtime---to grow exponentially. On the other hand, increasing the number of scenarios is necessary for SAA accuracy, especially when stochastic attributes have high variance. However, this also increases the runtime 
for creating these scenarios, formulating and solving SAA optimization problems (whose size grows with the number of scenarios), and assessing the empirical risks of packages over the scenario set.
Even worse, having more tuples also necessitates having more scenarios due to the curse of dimensionality---more samples are needed to make the SAA sufficiently accurate
in high-dimensional decision spaces~\cite{campi2009scenario}.

\textbf{Prior work} tackled scaling along either the tuple or the scenario dimension, but not both. For all methods (including ours), we assume that the size of the optimal package is small, which is often the case for real-world problems.
\skr~\cite{brucato2018package} efficiently processes deterministic package queries (no scenarios involved) on large relations by partitioning the relations into groups of similar tuples and constructing representatives of each group offline. During query execution, the \emph{sketch} phase solves the problem over only the representatives; the \emph{refine} phase then iteratively replaces representatives in the sketch package with tuples from their partitions. Both sketch and refine queries are small enough ILPs that can be directly solved by off-the-shelf optimizers such as Gurobi.

\ssearch~\cite{BrucatoYAHM20} helps scale SPQ processing along the scenario dimension. It exploits the fact that SPQs 
can be approximated as ILPs using SAA. 
It creates packages using a set of \emph{optimization scenarios} and validates their feasibility over a much larger set of \emph{validation scenarios}. The key challenge is that large numbers of optimization scenarios are typically needed to mitigate the ``optimizer's curse'', which occurs when packages created from the optimization scenarios violate risk constraints among the validation scenarios; however, the size---and hence processing time---of the ILPs grows with the number of optimization scenarios. To handle this scaling issue, it replaces the optimization scenarios by a small set of conservative scenarios called \emph{summaries}. Constraints based on these summaries are harder to satisfy than constraints based on a random set of scenarios as in SAA, so packages created from them are more risk-averse and likely to be validation-feasible. \reva{E.g., the summary of a set of scenarios in our investment example might comprise the scenario-wise minimum gain for each tuple.} By varying the number of summaries, conservativeness can be carefully controlled to ensure near-optimal and feasible solutions. 

These methods have serious limitations. \skr does not work on stochastic data, and \ssearch does not scale well with more tuples because the formulated ILP has one decision variable per tuple in the relation. Moreover, \ssearch does not work well with high-variance stochastic attributes: As variance increases, more optimization scenarios are needed to accurately reflect the properties of the validation scenarios. With too few optimization scenarios, an intermediate package is likely to be validation-feasible but sub-optimal, so \ssearch will spend a lot of time increasing the number of summaries to relax the problem, only to discover that more scenarios are needed.

\textbf{Our work}
both enriches the expressiveness of the \spaql language and is the first to scale SPQ processing along both the tuple and the scenario dimensions. With respect to expressiveness, we extend SPaQL to allow specification of \emph{Conditional Value-at-Risk} (CVaR) constraints, also called ``expected shortfall'' constraints. \revc{The CVaR risk measure is widely used in finance, insurance, and other risk-management domains~~\cite{alexander2004comparison,McNeilFE15,Sarykalin2008}.} The VaR constraint in our portfolio example is of the form:
\begin{center}
\lstinline!SUM(Gain) <= -10 WITH PROBABILITY <= 0.05!
\end{center}
whereas a CVaR constraint might be of the form:
\begin{center}
\lstinline!EXPECTED SUM(Gain) >= -10 IN LOWER 0.05 TAIL!
\end{center}
Roughly speaking, the former constraint requires that the ``bad event'' where the loss exceeds \$10 occur with probability of at most 0.05, whereas the latter requires that, given the occurrence of a high-loss event of the form ``loss exceeds $\$x$'' having 0.05 probability, the expected value of the loss does not exceed \$10. (Section~\ref{sec:prob} gives precise definitions.)
Though VaR is a popular risk measure,  risk analysts often prefer CVaR over VaR because it is a \emph{coherent} risk measure with nice mathematical properties~\cite{Sarykalin2008}. E.g., 
portfolio optimization problems with CVaR constraints promote diversification of stocks, and CVaR can avoid extreme losses more effectively than VaR.
For example, \revc{given a portfolio that satisfies the VaR constraint, if an investor loses at least \$10 (which happens with at most  5\% probability), they can actually lose much more than \$10 without any way to control the expected loss. Having an additional CVaR constraint caps such expected losses when the bad event occurs.}

We achieve scaling through two novel mechanisms: \emph{risk-con\-st\-raint linearization} (RCL) and \sskr. 

\reva{\frbox{RCL}
handles scenario-scaling issues by replacing each VaR and CVaR constraint by a \emph{linearized CVaR} (\LCVaR) constraint (Section~\ref{sec:prob}).
%
%
Crucially, an SPQ with only \LCVaR constraints leads to a smaller SAA approximation whose size is independent of the number of scenarios and can be solved efficiently.
Like CVaR, an \LCVaR  constraint is parameterized by two values: a tail specifier $\alpha$ and a value $V$, e.g., $\alpha=0.05$ and $V=-10$ in the foregoing CVaR example.
Our new RCL procedure efficiently finds values of $\alpha$ and $V$ for each constraint such that the resulting feasible region for the SAA problem contains a solution that is both feasible and near optimal with respect to original SPQ.}


\frbox{\sskr} handles tuple-scaling issues by using a new data partitioning and evaluation mechanism.  It retains the divide-and-conquer structure of \skr---similarly leveraging off-the-shelf solvers and using sets of optimization and validation scenarios---but incorporates non-trivial extensions to overcome challenges raised by stochasticity. \sskr uses our novel, trivially-parallel\-iz\-a\-ble partitioning method, \dpart. Unlike existing hierarchical stochastic data clustering approaches~\cite{gullo2017information, huang2021robust}, \dpart's time complexity is sub-quadratic with respect to the number of tuples in the relation. This makes it suitable for partitioning million-tuple relations.

\noindent
\textbf{Contributions and outline.}  
RCL and \sskr synergistically reduce overall latencies, ensuring superior scalability over both tuples and scenarios. \sskr keeps high-variance tuples in separate partitions due to their dissimilarity with low-variance tuples. While refining partitions with high-variance tuples, use of RCL reins in the SPQ-processing latencies. 
In more detail, we organize our contributions as follows.

\begin{itemize}[leftmargin=*, topsep=2pt]
    \item We extend \spaql to allow for the expression of CVaR constraints, which allow for better risk control in stochastic optimization problems.  We further discuss necessary background and provide an overview of our approach and key insights. [Section~\ref{sec:prob}]
    \item We detail how RCL replaces VaR and CVaR constraints in an SPQ by \LCVaR constraints, while ensuring the resulting packages are feasible and near-optimal for the original SPQ. [Section~\ref{sec:cvarif}]
  
  \item We present \sskr, a two-phase divide-and-conquer approach that splits SPQs on large relations into smaller problems that can be solved quickly.  Similar to \skr, it first solves a \emph{sketch} problem using representatives over data partitions, and then \emph{refines} the initial solution with data from each partition.  In contrast with \skr, our method employs stochastically-identical \emph{duplicates} of representatives 
  to handle the challenges of stochasticity.   [Section~\ref{section:sskr}]
  \item \sskr relies on appropriately-partitioned data to generate the sketch and refine problems. We propose a novel offline partitioning method for stochastic relations, \dpart, which can effectively partition stochastic relations containing millions of tuples in minutes. [Section~\ref{sec:partitioning}]
  \item We show that \sskr achieves a $(1-\epsilon)^2$-optimal solution w.r.t a user-defined approximation error bound $\epsilon$. [\appref{sec:theory}] 
  \item We show via experiments that \sskr generates high quality packages in an order of magnitude lower runtime than \ssearch.
  Furthermore, \sskr can execute package queries over millions of tuples within minutes, whereas \ssearch fails to produce any package at such scales even after hours of execution. [Section~\ref{sec:experimental_evaluation}]
\end{itemize}

\section{CVaR and Related Constraints}
\label{sec:prob}

In this section, we discuss the extension of \spaql via the addition of CVAR constraints. 
We first briefly review the types of constraints previously supported by \spaql and then describe the syntax and semantics of CVaR and related constraints in detail.

\smallskip\noindent\textbf{\spaql constraint modeling.} \looseness-1
We assume a relation with $n$ tuples. In SPQ evaluation, each tuple $t_i$ is associated with an integer variable $x_i$ that represents the tuple's multiplicity in the package result. A \emph{feasible package} is an assignment of the variables in $x=(x_1, \dots, x_n)$, that satisfies all query constraints; as in the introductory example, we typically want to find a feasible package that maximizes or minimizes a given linear objective function; see \cite[Appendix~A]{BrucatoYAHM21} for a complete \spaql language description. We denote by $t_i.A$ the 
value of the attribute $A$ in tuple $t_i$; if $A$ is stochastic, then $t_i.A$ is a random variable. 
The \spaql constraint types are as follows.
\begin{itemize}[leftmargin=*]
    \item \colorbox{light-gray}{\small \lstinline!REPEAT R!}:
    \emph{Repeat constraints}  cap the multiplicities of every tuple in the package, i.e., $x_i \leq (1+R)$, $\forall 1 \leq i \leq n$.
    \item\colorbox{light-gray}{\small \lstinline!COUNT(Package.*) <= S!}:
    \emph{Package-size constraints} bound the total number of tuples in the package, i.e., $\sum_{i=1}^{n} x_i \leq S$.
    \item \colorbox{light-gray}{\small \lstinline!SUM(A) <= V!}:
    \emph{Deterministic sum constraints} bound the sum of the values of a deterministic attribute $A$ in the package, i.e., $\sum_{i=1}^{n}  t_i.A * x_i \leq V$.
    \item \colorbox{light-gray}{\small \lstinline!EXPECTED SUM(A) <= V!}:
    \emph{Expected sum constraints} bound the expected value of the sum of the values of a stochastic attribute $A$ in the package, i.e., $\mathbb{E}[\sum_{i=1}^{n} t_i.A*x_i] \leq V$. 
    \item \colorbox{light-gray}{\small \lstinline[mathescape]!SUM(A) <= V WITH PROBABILITY <= $\ \alpha$!}:
    \emph{VaR constraints} bound the probability that the sum of stochastic attribute $A$ is below or above a value, i.e., $\Prob(\sum_{i=1}^{n} t_i.A * x_i \leq V) \leq \alpha$.
\end{itemize}

\smallskip\noindent\textbf{Value-at-Risk.} 
We first define the notions of quantile and VaR. Consider a stochastic attribute $A$ with cumulative distribution function $F_A$. For $\alpha\in[0,1]$, we define the \emph{$\alpha$-quantile} of $A$---or, equivalently, of $F_A$---as
$
q_\alpha(A)=\inf\{y:F_A(y)\ge \alpha\}.
$
We can then define the \emph{$\alpha$-confidence Value at Risk} of $A$ as
$
\text{VaR}_\alpha(A)=q_\alpha(A),
$
i.e., the VaR is simply a quantile. Thus a VaR constraint as above can be interpreted as a constraint of the form $\text{VaR}_\alpha(\sum_{i=1}^nt_i.A*x_i)\ge V$. 

\smallskip\noindent\textbf{Conditional Value-at-Risk.}
As discussed in the introduction, CVaR constraints help control extreme risks beyond what VaR constraints can provide. Following \cite{McNeilFE15,Sarykalin2008}, we define the \emph{lower-tail $\alpha$-confidence Conditional Value-at-Risk} of $A$ as 
\begin{equation}\label{eq:defCVaRL}
\text{CVaR}^\bot_\alpha(A)=\frac{1}{\alpha}\int_0^\alpha q_u(A)\,du=
\frac{1}{\alpha}\int_0^\alpha \text{VaR}_u(A)\,du.
\end{equation}
and the \emph{upper-tail $\alpha$-confidence Conditional Value-at-Risk} of $A$ as
\[
\text{CVaR}^\top_\alpha(A)=\frac{1}{1-\alpha}\int_\alpha^1 q_u(A)\,du=
\frac{1}{1-\alpha}\int_\alpha^1 \text{VaR}_u(A)\,du.
\]
From \cite[Lemma~2.16]{McNeilFE15}, we have that if $F_A$ is continuous, then
\begin{align}
              &\text{CVaR}^\bot_\alpha(A)=\mathbb{E}[A\mid A\le q_\alpha(A)] \label{eq:defECVaRL}\\
    \text{and}\hspace{5ex}&\text{CVaR}^\top_\alpha(A)=\mathbb{E}[A\mid A\ge q_\alpha(A)], \label{eq:defECVaRU}
\end{align}
which motivates our CVaR-constraint syntax. Specifically, for a value such as $\alpha=0.05$, the following \spaql CVaR constraints
\begin{lstlisting}
EXPECTED SUM(A) >= @$V$@ IN LOWER @$\alpha$@ TAIL
EXPECTED SUM(A) <= @$V$@ IN UPPER @$\alpha$@ TAIL
\end{lstlisting} 
correspond to the constraints $\text{CVaR}^\bot_\alpha(\sum_{i=1}^nt_i.A * x_i)\ge V$ and $\text{CVaR}^\top_{1-\alpha}(\sum_{i=1}^nt_i.A * x_i)\le V$, respectively.
In a portfolio setting, constraints using $\text{CVaR}^\bot_\alpha(A)$ are useful when $A$ represents a gain as in our portfolio example, and constraints using $\text{CVaR}^\top_{1-\alpha}(A)$ are useful when $A$ represents a loss. Since $\text{CVaR}^\bot_\alpha(A)=\text{CVaR}^\top_{1-\alpha}(-A)$, 
without loss of generality,
we will focus on $\text{CVaR}^\bot_\alpha$ and simply denote it by $\text{CVaR}_\alpha$; we will also restrict attention to maximization problems where the use of $\text{CVaR}^\bot_\alpha$ makes sense. Also, for simplicity, we assume henceforth that all random attributes have continuous distributions so that the representations in \eqref{eq:defECVaRL} and \eqref{eq:defECVaRU} are valid.
Then, given a set $S$ of i.i.d.\ samples from the distribution $F_Z$ of a random variable $Z$, we can estimate $\text{CVaR}_\alpha(Z)$ simply as $\widehat{\text{CVaR}}_\alpha(Z)$, the average over the lowest $\alpha$-fraction of values in $S$.\footnote{In general, $\text{CVaR}^\bot_\alpha(Z)=\mathbb{E}[Z\mid Z\le q_\alpha(Z)]+q_\alpha(Z)\bigl[\alpha - \mathbb{P}\bigl(Z\le q_\alpha(Z)\bigr)\bigr]$, so the only modification to the methods in this paper is that estimation of CVaR from optimization scenarios becomes slightly more complex.}





\smallskip
\noindent
\looseness-1
\textbf{Scalably solving SPQs with linearized CVaR.} In contrast with VaR constraints, CVaR constraints are convex, naturally reducing the complexity of SPQs.   However, replacing each VaR constraint with CVaR---so the query contains only CVaR constraints---does not suffice to solve SPQs efficiently: The convex optimization problem resulting from the modified SPQ is not amenable to SAA approximation and is usually very expensive to solve. 
%
To scalably solve
SPQs, we introduce the notion of \emph{linearized CVaR constraints}. For a package represented by integer variables\footnote{In a slight abuse of terminology, we use the term ``package'' to refer interchangeably to either the integer vector $x$ or the bag of tuples specified by $x$.} $x=(x_1,\ldots,x_n)$---i.e., the multiplicities of tuples in the package---and a stochastic attribute $A$, define
\[
\LCVaR_\alpha(x,A)=\sum_{i=1}^n \text{CVaR}_\alpha(t_i.A)*x_i.
\]
An \LCVaR constraint has the form $\LCVaR_\alpha(x,A)\ge V$ (in the lower $\alpha$-tail). When identically parameterized, an $\LCVaR$ constraint is more restrictive than a CVaR constraint, which in turn is more restrictive than a VaR constraint. Formally, letting $x\cdot A$ denote $\sum_{i=1}^n t_i.A * x_i$ and $\mathbb{Z}_0$ denote the set of nonnegative integers, we have the following result, which holds even if $F_A$ is not continuous. 
See \appref{appendix:APSproof} for all proofs.
\begin{theorem}\label{th:constraintCompare}
For any $x\in \mathbb{Z}_0^n$, $\alpha\in[0,1]$, $V\in\mathbb{R}$, and stochastic attribute $A$:  
\[
\begin{split}
\LCVaR_\alpha(x,A)\ge V&\implies \text{CVaR}_\alpha(x\cdot A)\ge V\\
&\implies \text{VaR}_\alpha(x\cdot A)\ge V
\end{split}
\]
\end{theorem}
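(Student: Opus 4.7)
The plan is to split the chain of implications into two separate steps, handling the easier one first. For the second implication, $\text{CVaR}_\alpha(x \cdot A) \geq V \implies \text{VaR}_\alpha(x \cdot A) \geq V$, I would argue directly from the integral definition~\eqref{eq:defCVaRL}. Since the quantile function $u \mapsto q_u(Z)$ is non-decreasing, we have $q_u(Z) \le q_\alpha(Z)$ for every $u \in [0,\alpha]$, so averaging gives
\[
\text{CVaR}_\alpha(Z) \;=\; \frac{1}{\alpha}\int_0^\alpha q_u(Z)\,du \;\le\; q_\alpha(Z) \;=\; \text{VaR}_\alpha(Z).
\]
Setting $Z = x \cdot A$ yields the desired implication. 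Crucially, this argument only relies on monotonicity of the quantile function, so it remains valid even when $F_A$ (and hence the distribution of $x\cdot A$) is not continuous.

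For the first implication, $\LCVaR_\alpha(x,A) \geq V \implies \text{CVaR}_\alpha(x\cdot A) \geq V$, I would prove the stronger pointwise inequality $\text{CVaR}_\alpha(x\cdot A) \geq \LCVaR_\alpha(x,A)$, which reduces everything to two standard properties of the lower-tail CVaR. The first is \emph{positive homogeneity}: for $c \geq 0$, $\text{CVaR}_\alpha(cZ) = c\,\text{CVaR}_\alpha(Z)$, which follows from $q_u(cZ) = c\,q_u(Z)$ plugged into~\eqref{eq:defCVaRL}. The second is \emph{superadditivity}: $\text{CVaR}_\alpha(Z_1+Z_2) \geq \text{CVaR}_\alpha(Z_1) + \text{CVaR}_\alpha(Z_2)$ for arbitrarily dependent $Z_1,Z_2$. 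Iterating these two properties with $x_i \geq 0$ gives
\[
\text{CVaR}_\alpha\!\left(\sum_{i=1}^n x_i \cdot t_i.A\right) \;\geq\; \sum_{i=1}^n \text{CVaR}_\alpha(x_i\cdot t_i.A) \;=\; \sum_{i=1}^n x_i \cdot \text{CVaR}_\alpha(t_i.A) \;=\; \LCVaR_\alpha(x,A),
\]
and chaining with the hypothesis $\LCVaR_\alpha(x,A) \geq V$ closes the argument.

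The main obstacle is justifying superadditivity of the lower-tail CVaR in the general (possibly discontinuous) setting, where the conditional-expectation representation~\eqref{eq:defECVaRL} may fail. I would sidestep this by reducing to the well-documented \emph{subadditivity} of the upper-tail CVaR (a.k.a.\ Expected Shortfall) as a coherent risk measure~\cite{McNeilFE15}. The key identity is $\text{CVaR}^\bot_\alpha(Z) = -\text{CVaR}^\top_{1-\alpha}(-Z)$, which follows from $q_u(-Z) = -q_{1-u}(Z)$ (valid almost everywhere for any distribution) substituted into~\eqref{eq:defCVaRL}. Applying subadditivity of $\text{CVaR}^\top_{1-\alpha}$ to $-Z_1$ and $-Z_2$ and negating both sides delivers superadditivity of $\text{CVaR}^\bot_\alpha = \text{CVaR}_\alpha$ without any continuity assumption, which is precisely what the theorem requires.
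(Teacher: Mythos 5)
Your proof is correct and follows essentially the same route as the paper: the second implication is the same monotonicity-of-quantiles argument applied to the integral definition \eqref{eq:defCVaRL} (you state it directly as $\text{CVaR}_\alpha \le \text{VaR}_\alpha$ where the paper phrases it as a contrapositive), and the first implication rests on the same coherence property of expected shortfall that the paper simply cites as \cite[Prop.~6.9]{McNeilFE15}. Your version merely unpacks that citation into positive homogeneity plus superadditivity of the lower-tail CVaR (correctly derived from subadditivity of the upper-tail version via $\text{CVaR}^\bot_\alpha(Z) = -\text{CVaR}^\top_{1-\alpha}(-Z)$), which makes the argument more self-contained but is not a different approach.
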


\reva{As a simple example, consider a relation with two tuples $t_1$ and $t_2$ and a stochastic attribute $A$ such that $\Prob(t_1.A=-1)=\Prob(t_1.A=1)=0.5$, and $t_2.A$ and $t_1.A$ are i.i.d. For $x=(1,1)$ and $\alpha=0.5$, we can verify that $\LCVaR_{0.5}(x,A)\le \text{CVaR}_{0.5}(x\cdot A)\le \text{VaR}_{0.5}(x\cdot A)$, which implies the assertion of the theorem for this example. Observe that $\text{CVaR}_{0.5}(t_1.A)=\text{CVaR}_{0.5}(t_2.A)=-1$, so that $\LCVaR_{0.5}(x,A)=1\cdot \text{CVaR}_{0.5}(t_1.A)+1\cdot \text{CVaR}_{0.5}(t_1.A)=-2$. Also, the random variable $Z=x\cdot A=1\cdot t_1.A+1\cdot t_2.A$ satisfies $\Prob(Z=-2)=\Prob(Z=2)=0.25$ and $\Prob(Z=0)=0.5$. Thus $\text{CVaR}_{0.5}(x\cdot A)=\mathbb{E}[Z\mid Z\le 0]=-2/3$ and  $\text{VaR}_{0.5}(x\cdot A)=\text{median}(Z)=0$, verifying the inequalities. Intuitively, the first inequality holds---strictly, in this example---because $\LCVaR$ involves separate averages over the lower tails of $t_1.A$ and $t_2.A$, each of which have half their probability mass strictly below the median at 0. In contrast, the random variable $Z$ has significant mass at the median value of 0, because positive values of $t_1.A$ can compensate for negative values of $t_2.A$ and vice versa, so the CVaR lower-tail average is higher. Moreover, VaR is simply the median of $Z$ whereas CVaR is an average of values at or below the median, which explains the second (strict) inequality.
}

\section{\lcsolve: Solving Small SPQs}\label{sec:cvarif}

In this section, we describe \lcsolve (Algorithm~\ref{alg:lcsolve}), a standalone scenario-scalable algorithm for solving SPQs over a relatively small set of tuples. By completely eliminating the need for scenario summaries, \lcsolve outperforms \ssearch. Moreover, our new \sskr algorithm (Section~\ref{section:sskr}) solves a large-scale SPQ with many tuples by solving a sequence $Q(S_0), Q(S_1),\ldots$ of relatively small-scale SPQs, 
and slight variants of Algorithm~\ref{alg:lcsolve} (discussed in Section~\ref{section:sskr}) can be used to solve each SPQ encountered during the sketch-and-refine process.

\begin{algorithm}[t]
\caption{\textsc{\lcsolve}}\label{alg:lcsolve}
{\footnotesize
\begin{algorithmic}[1]
\Require $Q :=$ A Stochastic Package Query
\Statex $S :=$ Set of stochastic tuples in the SPQ
\Statex $m :=$ Initial number of optimization scenarios
\Statex $\mathcal{V}$ $:=$ Set of validation scenarios
\Statex $\delta :=$ Bisection termination distance
\Statex $\epsilon :=$ Error bound

\Ensure $x :=$ A validation-feasible and $\epsilon$-optimal package for $Q(S)$ \\ \makebox[17pt]{} (or NULL if unsolvable)

\State $\mathcal{O} \gets \textsc{GenerateScenarios}(m)$ \Comment{Initial optimization scenarios}
\State \textsf{qsSuccess}, $x^{\text{D}} \gets $\textsf{QuickSolve}(Q, S) \Comment{Check if solution is ``easy''}\label{li:quicksolve}
\If{\textsf{qsSuccess} $=$ True}\label{li:quickreturnIf}
\State \Return $x^{\text{D}}$ \Comment{Either solution to $Q(S)$ or NULL}\label{li:quickreturn}
\EndIf

\State $\alpha, V\gets \textsc{GetParams}(Q)$ \Comment{Extract $(\alpha_r,V_r)$ for each $r\in Q$}\label{li:getparams}
\State $\omega_0\gets \textsc{OmegaUpperBound}(Q,S,x^{\text{D}},\mathcal{V})$\label{li:EstObjVal} \label{li:getUB}
\While{True}
\State $V'\gets V$, $\alpha'\gets 1$ \Comment{Initial \LCVaR parameter values}
\State Compute $\hat V_0$ as in \eqref{eq:defVL}
and set $V_L\gets \hat V_0$, $V_U\gets V$, $\alpha_L\gets \alpha$, $\alpha_U\gets 1$
\While{True}
\State \textcolor{gray}{$\triangleright$ \textit{Search for $\alpha$}}
\State $\alpha'_\text{old}\gets \alpha'$
\State  status, $\alpha', x\gets \textsc{$\alpha$-Search}(V', \alpha_L,\alpha_U,Q,S,\epsilon,\delta,\mathcal{O},\mathcal{V},\omega_0)$\label{li:alphasearch}
\State \algorithmicif\ status.NeedScenarios $=$ True \algorithmicthen\ \textbf{break}\label{li:needScenariosalpha}
\State \algorithmicif\ status.Done $=$ True \algorithmicthen\ \textbf{return} $x$\label{li:donealpha}
\State \textcolor{gray}{$\triangleright$ \textit{Search for V}}
\State $V'_\text{old}\gets V'$
\State  status, $V', x\gets \textsc{V-Search}(\alpha', V_L,V_U,Q,S,\epsilon,\delta,\mathcal{O},\mathcal{V},\omega_0)$\label{li:Vsearch}
\State \algorithmicif\ status.NeedScenarios $=$ True \algorithmicthen\ \textbf{break} \label{li:needScenariosV}
\State \algorithmicif\ status.Done $=$ True \algorithmicthen\ \textbf{return} $x$\label{li:doneV}
\State $V_U\gets \max(V'-\delta,V_L)$
\State \algorithmicif\ $\max(V'_\text{old}-V',\alpha'_\text{old}-\alpha')<\delta$ \algorithmicthen\ \textbf{break} \label{li:noprogress}
\EndWhile
\State \algorithmicif\ $2m>|\mathcal{V}|$ \algorithmicthen\ \textbf{return} $x$ \Comment{Best solution found so far}\label{li:genscenariosA}
\State $\mathcal{O} \gets \mathcal{O} \cup \textsc{GenerateScenarios}(m)$\Comment{Double \# of scenarios}\label{li:genscenariosB} 
\State $m\gets 2m$\label{li:genscenariosC}
\EndWhile

\end{algorithmic}}
\end{algorithm}

\smallskip
\noindent
\textbf{Overview.} \looseness-1
\reva{The basic idea is to replace each \emph{risk constraint} (VaR or CVaR constraint) in an SPQ with a corresponding \LCVaR constraint of the form $\LCVaR_\alpha(x, A) \ge V$. Such replacement ensures that the resulting SAA problem size is independent of the number of scenarios. Moreover, the feasible region (in the space of possible packages), which was formerly non-linear and non-convex, is transformed to a convex polytope, so that the modified SAA problem can be efficiently solved using an ILP solver. We carefully choose the \LCVaR constraints so that (1)~the package solution for the re-shaped feasible region is \emph{validation-feasible}, i.e., satisfies the \emph{original} set of risk constraints with respect to the validation scenarios, and (2)~the package solution is \emph{$\epsilon$--optimal}, i.e., its objective value is $\ge (1-\epsilon)\omega$, where $\omega$ is the objective value for the true optimal solution to the original SPQ and $\epsilon$ is an application-specific error tolerance. Since the value of $\omega$ is unknown, we use an upper bound $\omegaU\ge \omega$, so that any package with objective value above $(1-\epsilon)\omegaU$ will have an objective value above $(1-\epsilon)\omega$ (see below). One simple choice for $\omegaU$ that works well in practice is $\omega_0$---the objective value of the package solution $x^{\text{D}}$ to query $Q^{\text{D}}(S)$, where $Q^{\text{D}}(S)$ is the deterministic package query obtained by removing all probabilistic constraints from $Q(S)$, i.e., removing all VaR, CVaR and expected sum constraints (line~\ref{li:getUB}).

\emph{Risk-constraint linearization} (RCL) is the process of replacing all risk constraints in a query $Q(S)$ by \LCVaR constraints to achieve objectives (1) and (2) above. RCL is accomplished via a search over potential $(\alpha,V)$ values for every \LCVaR constraint. For each choice of values, we solve the resulting SPQ via SAA approximation using the optimization scenarios and then check whether the returned package is validation-feasible and $\epsilon$--optimal (as conservatively estimated above).

To design an effective search strategy,  we first observe that, by Theorem~\ref{th:constraintCompare}, an \LCVaR constraint with parameters $\alpha$ and $V$ is more restrictive than a risk constraint with the same parameters. This strict \LCVaR parameterization, when applied to all constraints, results in a feasible-region polytope that is likely too small and may not contain the true optimal solution. Thus, the resulting package solution, while likely validation-feasible, will also likely be far from $\epsilon$--optimal. 
By varying the $\alpha$ and $V$ parameters, we can systematically shift and rotate the \LCVaR constraint boundaries until the feasible region contains a solution that is validation-feasible and $\epsilon$-optimal. As discussed below, for each constraint it suffices to search for optimal parameters in a bounded set of the form $[\alpha,1]\times[V_0,V]$, where $\alpha$ and $V$ are the parameters of the original VaR or CVaR constraint, which lets us use an efficient alternating-parameter search algorithm that navigates between solutions that are suboptimal and solutions that are validation-infeasible to find the desired package solution. The user will know whether the parameter search was successful or not. If successful, \lcsolve will return a validation-feasible and $\epsilon$--optimal package; otherwise, it will return the best validation-feasible package encountered so far, but with no optimality guarantees. Although the latter behavior is theoretically possible, we did not encounter any failed parameter searches in our experiments.}

\smallskip\noindent\textbf{Quick solution in special cases.}
Sometimes a query $Q(S)$ can be solved quickly, without going through the entire RCL process.
If the ILP solver can find a package solution $x^{\text{D}}$ to query $Q^{\text{D}}(S)$ as above, and if this solution is validation-feasible for $Q(S)$, then $x^{\text{D}}$ is the solution to $Q(S)$, since it satisfies all constraints, including the probabilistic ones, and the objective value is as high as possible since the probabilistic constraints have been completely relaxed, thereby maximizing the feasible region.
On the other hand, if no feasible solution can be found for $Q^{\text{D}}(S)$, then $Q(S)$ is also unsolvable, since $Q$ has all the constraints that $Q^{\text{D}}$ has, and more. We denote by \textsc{QuickSolve}$(Q,S)$ the subroutine that detects these two possible outcomes (line~\ref{li:quicksolve}). \textsc{QuickSolve}$(Q,S)$ returns a pair $(\textsf{qsSuccess},x^{\text{D}})$. If $\textsf{qsSuccess}= \textsf{True}$, then $x^{\text{D}}$ either is the package solution to $Q(S)$ or $x^{\text{D}}= \textsf{NULL}$, which indicates that $Q(S)$ is unsolvable, and \lcsolve terminates and returns one of these values (lines~\ref{li:quickreturnIf}--\ref{li:quickreturn}). If $\textsf{qsSuccess}= \textsf{False}$ then $x^{\text{D}}$ solves $Q^{\text{D}}(S)$ but is validation-infeasible for $Q(S)$ and the full RCL process is needed.




\smallskip\noindent\textbf{The case of a single risk constraint.} To understand how RCL works in the case that $\textsf{qsSuccess}= \textsf{False}$, first consider an SPQ $Q(S)$ having a single risk constraint---on a random attribute $A$ with parameters $(\alpha,V)$---that will be replaced by an \LCVaR constraint. If the \LCVaR constraint is too stringent, the package solution will be validation-feasible, but sub-optimal, whereas if the constraint is too relaxed, the package solution will be validation-infeasible. A package solution may also be validation-infeasible or sub-optimal if the number of optimization scenarios is too small so that the SAA approximation is not accurate. We will discuss how RCL deals with this latter issue shortly, but suppose for now that the initial number $m$ of optimization scenarios is adequate.

Let $\alpha'$ and $V'$ denote the adjusted parameters of the \LCVaR constraint. Recall \reva{from Theorem~\ref{th:constraintCompare}} that, for any risk constraint with parameters $\alpha$ and $V$, setting $\alpha'=\alpha$ and $V'=V$ will cause the corresponding \LCVaR constraint to be overly restrictive. Also note that an \LCVaR constraint of the form $\LCVaR_{\alpha'}(x, A) \ge V'$ becomes less restrictive as either $V'$ decreases or $\alpha'$ increases. The parameter $\alpha'$ can be set as high as $1$, in which case each coefficient $\hat C_i=\LCVaR_{\alpha'}(t_i.A)$ will be the average of the $t_i.A$ values across \emph{all} the optimization scenarios. On the other hand, for a given value of $\alpha$, the parameter $V'$ can be set as low as $V_0$, where $V_0=\sum_{i=1}^n \text{CVaR}_\alpha(t_i.A)*x^D_i$ and $x^D=(x^D_1,\ldots,x^D_n)$ is the package solution to the probabilistically unconstrained problem $Q^D(S)$. Note that, given a set $\mathcal{O}$ of optimization scenarios, we can approximate $V_0$ as previously discussed:
\begin{equation}\label{eq:defVL}
\hat V_0=\sum_{i=1}^n \widehat{\text{CVaR}}_\alpha(t_i.A,\mathcal{O})*x^D_i.
\end{equation}

\reva{For any ``non-trivial'' risk constraint, i.e., a constraint that is not satisfied by $x^D$, setting the \LCVaR parameters to $(V',\alpha')=(V_0,1)$ will necessarily yield a validation-infeasible package that does not satisfy the constraint.}
We thus need to search for the \reva{least restrictive} value of $V'$ and $\alpha'$ between the limits $[V_0, V]$ $[\alpha, 1]$ respectively \reva{that results in a validation-feasible and $\epsilon$--optimal package solution.}


\smallskip\noindent\textbf{Alternating parameter search (APS).} The RCL process starts by setting the \LCVaR constraint values to $(V',\alpha')=(V,1)$. Using the maximal value of $\alpha'$ will most likely make the package solution validation-infeasible. 
To attain feasibility, we make the \LCVaR constraint more restrictive via an \emph{$\alpha$-search}, that is, by decreasing $\alpha'$, using bisection, down to the maximum value for which the resulting package is validation-feasible, while keeping $V'$ unchanged (line~\ref{li:alphasearch})---decreasing $\alpha'$ further would lower the objective value.
After the above $\alpha$-search terminates with a validation-feasible package, we improve the objective value (while maintaining feasibility) by making the \LCVaR constraint less restrictive via a \emph{$V$-search}, that is, by decreasing $V'$, via bisection, down to the maximum value for which the corresponding package remains validation-feasible, while keeping $\alpha'$ unchanged (line~\ref{li:Vsearch}). Each bisection search for $V'$ terminates when the length of the search interval falls below a specified small constant $\delta$;
in our experiments we found that $\delta=10^{-3}$ was an effective and robust choice.

We then pivot back to adjusting $\alpha'$. Specifically, we decrease the new value of $V'$ by $\delta$, so that the corresponding package based on $(V'-\delta,\alpha')$ is now validation-infeasible. We then decrease $\alpha'$ using an $\alpha$-search to regain feasibility, then decrease $V'$ via a $V$-search to improve the objective, and so on. If any intermediate package is validation-feasible for the original SPQ $Q(S)$ and is near-optimal, i.e., the objective value equals or exceeds $(1-\epsilon)\omega_0$, then we immediately terminate the RCL search and return this package as our solution, setting the variable \textsf{Done} to \textsf{True} (lines~\ref{li:donealpha} and \ref{li:doneV}). The foregoing procedure works even if the initial $(V,1)$ \LCVaR constraint yields a validation-feasible package; in this case the first $\alpha$-search will trivially result in  $(V',\alpha')=(V,1)$, and the subsequent $V$-search will try to improve the objective value.

Our motivation for using APS is that it provably finds the optimal \LCVaR parameterization in the idealized setting where we can solve any SPQ exactly, determine the feasibility and objective value for any package exactly, and compute any VaR or CVaR exactly, without the need for SAA. Specifically, for an SPQ $Q(S)$ with one risk constraint parameterized by $(\alpha,V)$, denote by $\mathcal{F}_{Q(S)}$ the set of all \LCVaR parameterizations $(\alpha',V')\in [\alpha, 1]\times [V_0, V]$ for which the transformed SPQ $Q'(S;\alpha',V')$ with linearized risk constraint admits a feasible package solution. For any $(\alpha',V')\in\mathcal{F}_{Q(S)}$, denote by $\text{Obj}(\alpha',V')$ the objective value of its corresponding package solution. \reva{Intuitively, let $(\alpha^*,V^*)$ be the best parametrization in $\mathcal{F}_{Q(S)}$. Since APS minimally decreases $\alpha'$ such that the resulting package is validation-feasible (while maximizing the objective with the current value of $V'$), it is necessary that APS finds the point $(\alpha^*,\tilde{V})$ for some $\tilde{V}$. The following $V$-search will find $(\alpha^*,V^*)$}. Formally, we have the following result: 

\begin{theorem}\label{th:alternating_parameter_search}
Let $Q(S)$ be is a solvable SPQ with one risk constraint and suppose that $\mathcal{F}_{Q(S)}$ is nonempty. Then APS, under the idealized-setting assumption, will find a parameterization $(\alpha^*,V^*)\in\mathcal{F}_{Q(S)}$ such that $\text{Obj}(\alpha^*,V^*)\ge \text{Obj}(\alpha',V')$ for all $(\alpha',V')\in\mathcal{F}_{Q(S)}$.
\end{theorem}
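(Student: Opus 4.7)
My plan is to combine a monotonicity property of $\text{Obj}$ with a geometric characterization of $\mathcal{F}_{Q(S)}$ and of the trajectory that APS traces along its ``northwest'' boundary.

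I would first establish that, on $\mathcal{F}_{Q(S)}$, $\text{Obj}(\alpha',V')$ is non-decreasing in $\alpha'$ and non-increasing in $V'$. This follows because $\text{CVaR}_\alpha(t_i.A)$ is non-decreasing in $\alpha$ (an average over an expanding lower tail), so enlarging $\alpha'$ or shrinking $V'$ weakens the \LCVaR constraint $\sum_i \text{CVaR}_{\alpha'}(t_i.A)\,x_i \ge V'$, enlarges the integer-feasible polytope, and cannot decrease the maximum objective value. Defining $\alpha^\dagger(V')=\sup\{\alpha:(\alpha,V')\in\mathcal{F}_{Q(S)}\}$ and $V^\dagger(\alpha')=\inf\{V:(\alpha',V)\in\mathcal{F}_{Q(S)}\}$, monotonicity forces $(\alpha^*,V^*)$ onto the ``northwest'' boundary: $\alpha^*=\alpha^\dagger(V^*)$ and $V^*=V^\dagger(\alpha^*)$, since any slack in either direction would admit a strictly better point by Step~1.

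Next I would characterize APS's iterates. In the idealized setting, exact bisection makes each $\alpha$-search return $\alpha^\dagger(V')$ at its current $V'$ and each $V$-search return $V^\dagger(\alpha')$ at its current $\alpha'$, so every iterate lies on the northwest boundary $\{(\alpha^\dagger(V'),V'):V'\in[V_0,V]\}$. The update $V_U \gets \max(V'-\delta,V_L)$ after each $V$-search restricts the next $V$-search to values $\le V'-\delta$, so $V'$ strictly decreases by at least $\delta$ between successive $\alpha$-searches and the scanned values sweep monotonically from $V$ toward $V_0$. Since $\alpha^\dagger$ is itself monotone in $V'$ (a property derivable from the same \LCVaR-structure argument used for Step~1), the iterates $\alpha_k=\alpha^\dagger(V_{k-1}-\delta)$ traverse the entire range of $\alpha^\dagger$ up to resolution $\delta$; in particular they attain $\alpha^*$ at some $\tilde V$ within $\delta$ of $V^*$. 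The subsequent $V$-search from $(\alpha^*,\tilde V)$ then returns $V^\dagger(\alpha^*)=V^*$, so $(\alpha^*,V^*)$ is itself visited and APS returns a package achieving $\text{Obj}(\alpha^*,V^*)$.

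The principal difficulty is rigorously handling the interplay between the discrete $\delta$-pivot and the possibly non-smooth or non-strictly-monotone boundary $V'\mapsto\alpha^\dagger(V')$. One must rule out the ``no progress'' termination on line~\ref{li:noprogress} firing before $(\alpha^*,V^*)$ is inspected, and show that some iterate actually satisfies $\alpha^\dagger(V_k)=\alpha^*$ exactly rather than only approximately. I would address this by (i)~observing that the $\delta$-pivot guarantees $V'_\text{old}-V'\ge\delta$ per iteration, so ``no progress'' cannot trigger prematurely; and (ii)~invoking the idealized setting---taking $\delta\to 0$---so that the pivot grid becomes dense and continuity of $\alpha^\dagger$ on the interior of $\mathcal{F}_{Q(S)}$ yields $\alpha^*$ exactly at a visited $V'$. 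This is where the idealized-setting hypothesis does its real work.
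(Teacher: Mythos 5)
Your setup---monotonicity of $\text{Obj}$ in $(\alpha',V')$, the boundary functions $\alpha^\dagger$ and $V^\dagger$, and the observation that in the idealized setting each $\alpha$-search returns $\alpha^\dagger(V')$ and each $V$-search returns $V^\dagger(\alpha')$---matches the scaffolding the paper's proof uses implicitly. The divergence, and the genuine gap, is in how you get the iterates to land on $(\alpha^*,V^*)$. You argue by a sweep: the visited $V'$ values decrease by at least $\delta$ per round, so the trajectory ``traverses the entire range of $\alpha^\dagger$,'' and you then appeal to $\delta\to 0$ together with continuity of $\alpha^\dagger$ to hit $\alpha^*$ exactly. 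This does not close the gap. First, $\alpha^\dagger$ need not be continuous: the returned package is an integer vector and its feasibility with respect to the original risk constraint is a discrete property, so $V'\mapsto\alpha^\dagger(V')$ is a monotone function that can jump across $\alpha^*$ between any two grid points. Second, the idealized-setting hypothesis concerns exact solving, exact feasibility checking, and exact CVaR computation; it does not license $\delta\to 0$, and in any case the sweep's progress comes from the macroscopic jumps $V'\mapsto V^\dagger(\alpha')$ performed by the $V$-search rather than from the $\delta$-pivot, so the set of visited $V'$ values does not become dense as $\delta\to 0$.

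The paper avoids the density issue entirely with an exchange argument that your outline is missing: because $(\alpha^*,V^*)$ is optimal and $\text{Obj}$ is monotone, any $\alpha'>\alpha^*$ satisfies $(\alpha',V^*)\notin\mathcal{F}_{Q(S)}$, hence $V^\dagger(\alpha')>V^*$; consequently every subsequent $\alpha$-search is performed at some $V'\ge V^*$, where $(\alpha^*,V')$ is still feasible, so the value returned is always $\ge\alpha^*$. The iterates therefore approach $\alpha^*$ from above and can never overshoot it. Taking $\tilde\alpha$ to be the least $\alpha'>\alpha^*$ at which an infeasible parameterization $(\tilde\alpha,\tilde V)$ is encountered, one shows $\tilde V\ge V^*$, that the next $\alpha$-search returns exactly $\alpha^*$ (any value strictly between $\alpha^*$ and $\tilde\alpha$ would trigger a $V$-search producing an infeasible point that contradicts the minimality of $\tilde\alpha$), and that the following $V$-search then returns $V^*$; the case $\alpha^*=1$ is handled separately and trivially. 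I recommend replacing your sweep-and-limit step with this bracketing argument; the monotonicity and boundary-characterization portions of your outline can stay as they are.
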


In practice, we run APS using a set $\mathcal{O}$ of optimization scenarios to compute packages and validation scenarios $\mathcal{V}$ assess feasibility and objective values. The ideal setting essentially corresponds to $|\mathcal{O}|=|\mathcal {V}|=\infty$, so we expect APS to work increasingly well as $|\mathcal{O}|$ and $|\mathcal {V}|$ increase. Our experiments in Section~\ref{sec:experimental_evaluation} indicate good practical performance for the ranges of $|\mathcal{O}|$ and $|\mathcal {V}|$ that we consider.

\smallskip\noindent\textbf{Multiple risk constraints.} In general the set $R$ of risk constraints appearing in $Q$ satisfies $|R|>1$, so that quantities such as $V$ and $\alpha$ are vectors; the function \textsc{GetParams} parses the query $Q$, extracts the values $V_r$ and $\alpha_r$ for each constraint $r\in R$, and organizes the values into the vectors $V$ and $\alpha$ (line~\ref{li:getparams}). The \textsc{QuickSolve}$(Q,S)$ function works basically as described above. Assuming that qsSuccess $=$ False, so that the full RCL procedure is needed, the linearization process starts by setting $\alpha'_r=1$ and $V'=V_r$ for each $r\in R$, so that one or more of the risk constraints $r\in Q$ are validation-violated; denote by $R^*\subseteq R$ the set of problematic risk constraints. The search then decreases the problematic $\alpha'_r$ values via individual $\alpha$-searches until all the SPQ constraints are validation-satisfied. These $\alpha$-searches are synchronized in that, at each search step, a bisection operation is executed for each $\alpha'_r$ value with $r\in R^*$ and then the resulting overall set of risk constraints is checked for validation-satisfaction. The $\alpha$-search terminates when bisection has terminated for all of the problematic constraints due to convergence.\footnote{In theory, an initially non-problematic constraint might become problematic during the $V$-search, but we did not observe this behavior in any of our experiments. In any case, one can modify the algorithm slightly to deal with this rare situation.} 
When the $\alpha$-search terminates, the RCL procedure next decreases each $V'_r$ value by $\delta$ and commences a set of synchronized $V$-searches, and the two types of search alternate as before until a solution is produced.




\smallskip\noindent\textbf{Increasing the number of optimization scenarios.} \looseness-1
We have been assuming that the initial number $m$ of scenarios is sufficiently large so that the SAA ILP well approximates the true SPQ, but this may not be the case in general. Therefore, right after each (synchronized) bisection step during an $\alpha$-search or $V$-search, the RCL process examines the intermediate package corresponding to the new set of constraints just produced. If there is a significant relative difference between any VaR, CVaR, or expected value
when computed over the optimization scenarios and when computed over the validation scenarios,
then it follows that SAA approximation is poor.
If any relative difference is larger than $\epsilon$, then
the $\alpha$-search or $V$-search terminates immediately and sets \textsf{NeedScenarios} to \textsf{True} (lines~\ref{li:needScenariosalpha} and \ref{li:needScenariosV}). We then increase the number of scenarios (line~\ref{li:genscenariosA}) and restart the RCL process. We also increase the number of scenarios if the values of both $\alpha'$ and $V'$ stop changing and a solution package has not yet been found (line~\ref{li:noprogress}). If the number of optimization scenarios starts to exceed the number of validation scenarios, we terminate the algorithm and return the best solution so far (line~\ref{li:genscenariosA}).

\smallskip\noindent\textbf{Termination and guarantees.}
Assuming that the set $\mathcal{V}$ of validation scenarios is sufficiently large and that the upper bound $\omega_0$ is sufficiently tight, \lcsolve will terminate at line~\ref{li:quickreturn}, \ref{li:donealpha}, or \ref{li:doneV} with a near-optimal solution to $Q(S)$ or will terminate at line~\ref{li:quickreturn} with a \textsf{NULL} solution indicating that the problem is unsolvable. This is the behavior we observed in all of our experiments. Otherwise, \lcsolve will terminate at line~\ref{li:genscenariosB} and return the best package found so far, but with no guarantees on optimality. In this case, the user can increase the number of validation scenarios and try again. 

\section{\sskr}\label{section:sskr}

\begin{figure}[t]
    \centering
    \includegraphics[width=1\linewidth]{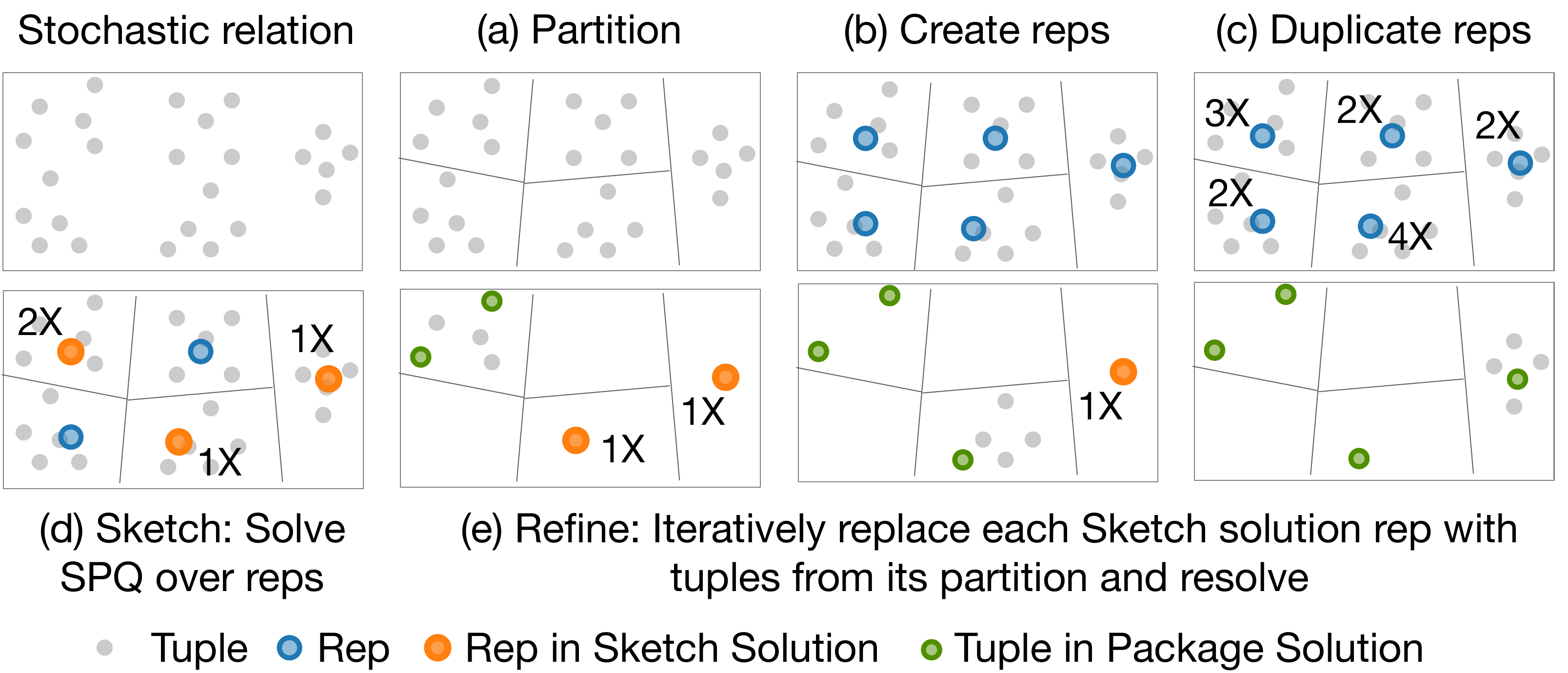}
    \vspace{-6mm}
    \caption{Solving a stochastic package query (SPQ) with \sskr.}
    \vspace{-4pt}
    \label{fig:stochastic-sketchrefine-hlf}
\end{figure}

\lcsolve alone cannot handle SPQs with very large numbers of tuples. To achieve scalability across the tuple dimension, we propose a new evaluation mechanism, \sskr. This algorithm follows the divide-and-conquer approach of \skr~\cite{brucato2018package} but makes the necessary (and nontrivial) modifications needed to handle uncertain data.

Figure~\ref{fig:stochastic-sketchrefine-hlf} gives a high-level illustration of the processing steps. The algorithm first partitions the data offline into groups of similar tuples 
(Figure~\ref{fig:stochastic-sketchrefine-hlf}a)
using \dpart, a novel trivially-par\-al\-lel\-iz\-a\-ble stochastic partitioning scheme (Section~\ref{sec:partitioning}) which ensures that (1)~tuples in the same partition are similar, 
and (2)~the maximum number $\tau$ of tuples allowed in each partition is small enough so that an ILP based on these tuples can be solved quickly.

\sskr then creates a \emph{representative} for each partition
(Figure~\ref{fig:stochastic-sketchrefine-hlf}b). A representative of a partition is an artificial tuple in which the value of each deterministic attribute is the average value over the tuples in the partition, and the stochastic-attribute distributions equal those of a selected tuple from the partition; Section~\ref{subsec:sketch} below details the selection process.

The next step is to create a \emph{sketch} $S_0$ using the representatives. In the deterministic setting, it suffices to simply define the sketch as the union of the representatives; the corresponding package solution specifies a multiplicity $\mult(t)\in [0,1,2,\ldots]$ for each representative $t$.
In the stochastic setting, however, this can lead to infeasible or sub-optimal packages due to inflation of risk---see Section~\ref{subsec:sketch} below---so instead of relying solely on multiplicities, we augment the sketch $S_0$ with distinct but stochastically identical \emph{duplicates} of each representative
(Figure~\ref{fig:stochastic-sketchrefine-hlf}c). To account for statistical correlation between the tuples within a partition, the duplicates must also be mutually correlated---a Gaussian copula method called NORTA is used to generate correlated duplicates~\cite{ghosh2003behavior}.

\sskr then solves $Q(S_0)$ using a slight variant of \lcsolve, called \lcsolve-S, 
thereby assigning a multiplicity to each duplicate (Figure~\ref{fig:stochastic-sketchrefine-hlf}d)
and creating a package solution. This package solution is then \emph{refined} into a new set of tuples $S_1$ by selecting a (subset of) partitions containing at least one tuple appearing in the package solution and replacing the 
duplicates in these partitions by the original tuples in them
(Figure~\ref{fig:stochastic-sketchrefine-hlf}e); because the size of each partition is bounded, the resulting query $Q(S_1)$ can be solved quickly using a slight variant of \lcsolve called \lcsolve-R. This refinement process continues
until \revb{all duplicates in the sketch solution are replaced by tuples from their respective partitions, thus obtaining the final package solution}. The following sections describe these algorithmic components in detail.

\subsection{The Sketch Phase} \label{subsec:sketch}

\begin{algorithm}[t]
\caption{\textsc{\solvesketch}}\label{alg:sketch}
{\footnotesize
\begin{algorithmic}[1]
\Require $Q:=$ A Stochastic Package Query
\Statex $T^\text{R} :=$ A set of representative tuples
\Statex $\tau :=$ Maximum number of tuples in a partition
\Statex $\Delta\Gamma :=$ Change in relative risk tolerance 
\Statex $m :=$ Initial number of optimization scenarios
\Statex $\mathcal{V} :=$ Set of validation scenarios
\Statex $P_\text{max} :=$ Max. number of distinct tuples in a package
\Statex $\epsilon :=$ Error approximation bound
\Statex $\delta :=$ Bisection termination distance

\Ensure $x :=$ A validation-feasible + $\epsilon$-optimal package for $Q(S_0)$ 

\State \textsf{qsSuccess}, $x^{\text{D}} \gets \textsf{QuickSolve}(Q, T^\text{R})$ \Comment{Is the solution ``easy''?}\label{li:quicksolveSSKRa}
\If{\textsf{qsSuccess} $=$ \textsf{True}}
\State \Return $x^{\text{D}}$ \Comment{Either sketch solution or NULL}\label{li:quicksolveSSKRb}
\EndIf
\State $\Gamma \gets$ \textsc{InitialRiskTolerance($\tau$)}\label{li:gammainit}
\While{$m \leq |\mathcal{V}|$}
      \State $S_0 \gets$ $\emptyset$ 
      \For{$t \in T^\text{R}$}\label{li:genloop}
       \State $d \gets \textsc{NumberOfDuplicates}(t, Q, \Gamma, m, P_\text{max})$\label{li:numdupes}
       \State $H_t\gets \textsc{Partition}(t)$ \Comment{Set of tuples represented by $t$}
       \State $\overline{\rho}\gets \textsc{MedianCorrelation}(t,H_t)$\label{li:mediancorr}
       \State $S_0\gets S_0\cup \textsc{GenDuplicates}(t,d,\overline{\rho})$\label{li:gendupes}
      \EndFor
      \State $x_0, \textsc{NeedsScenarios} \gets \lcsolveS(Q, S_0, m, \mathcal{V}, \delta,\epsilon)$\label{li:solvesketch}
      \If{$\textsc{NeedsScenarios} = \textsc{True}$}\label{li:immediatescenariosA}
      \State $m \gets \min(2m, |\mathcal{V}|)$\label{li:immediatescenariosB}
      \Else
      \If{$x_0\not=$ NULL} \textbf{\Return} $x_0$
      \Else
      \If{$\Gamma > 0.0$}
      \State $\Gamma \gets \max(\Gamma - \Delta\Gamma, 0.0)$\label{li:gammaDecr}
      \Else
      \State $m \gets \min(2m, |\mathcal{V}|)$~\label{li:reiterate}
      \EndIf
      \EndIf
      \EndIf
\EndWhile
\State \Return \textsc{NULL} \Comment{means the problem is unsolvable}
\end{algorithmic}}
\end{algorithm}

\textsc{SolveSketch} (Algorithm~\ref{alg:sketch}) creates a sketch $S_0$ using the set $T^\text{R}$ of representatives produced by \dpart and then solves the resulting query $Q(S_0)$ to create an initial sketch solution package $x_0$. As with \lcsolve, the \textsc{QuickSolve} function tries to solve the deterministic package query $Q^\text{D}(T^\text{R})$ obtained by removing all risk constraints from $Q$. If a solution $x^\text{D}$ exists that is validation-feasible with respect to $Q$, it is returned as the sketch solution $x_0$; if a solution to $Q^\text{D}(T^\text{R})$ does not exist, then \textsc{SolveSketch} declares the sketch problem unsolvable and returns \textsf{NULL} (lines~\ref{li:quicksolveSSKRa}--\ref{li:quicksolveSSKRb}).

If $\textsf{qsSuccess}=\textsc{False}$, then the full query $Q$ must be solved. The key difference from the deterministic setting is the need to create duplicates of each representative in $T^\text{R}$ to form the initial sketch $S_0$. The key issues are (i)~determining the number of duplicates to create for each representative, (ii)~determining the appropriate correlation between a set of duplicates, and (iii)~generating scenarios from correlated duplicates. We first motivate the use of duplicates and then discuss how issues~(i) and (ii) are addressed; see 
\appref{appendix:NORTA}
for details on how the NORTA procedure is used to generate scenarios from correlated duplicates.

\smallskip\noindent\textbf{The need for duplicates.} \revb{Duplicates are used in a partition $P$ to ensure that optimal tuples in $P$ do not get erroneously eliminated from consideration due to overestimation of risk. As a simple example, consider a  partition containing tuples $t_1,\ldots,t_k$ having a stochastic attribute $A$ such that $t_i.A$ has an independent $N(0,1)$ distribution (normal with mean~0 and variance~1) for all $t_i\in P$. Also suppose that the SPQ has a constraint of the form \lstinline[mathescape]!sum($A$) <= -2.0 with probability <= 0.1!. 
Suppose that the truly optimal package contains two tuples $t_1$ and $t_2$ from $P$, each with a multiplicity of 1. The sum $t_1.A+t_2.A$ has a $N(0,2)$ distribution, so that $\Prob(t_1.A+t_2.A\le -2.0)\approx 0.08$ and there are no feasibility issues. Here, losses from one tuple can be offset by gains in the other tuple, reducing the overall risk of high losses. However, if there is only a single representative $t_p\in P$ in the sketch of $P$, then a candidate sketch package that tried to use two tuples from $P$ (as in the optimal package) would end up using $t_p$ with multiplicity 2, so that $2t_p.A\sim N(0,4)$ and $\Prob(2t_p.A\le -2.0)\approx 0.16$, violating the constraint. Due to this overestimation of risk, the representative $t_p$ can have a multiplicity of at most 1 in the sketch package. Suppose that another representative $t_r$ from a different partition is also in the sketch package (a typical outcome). Then we cannot refine $[t_p:1,t_r:m]$ to the intermediate package $[t_1:1, t_2:1, t_r: m]$ for any $m\ge 1$ since this package is necessarily infeasible; if it were feasible, then $[t_1:1,t_2:1]$ would not be optimal since the objective of $[t_1:1, t_2:1, t_r: m]$ would be greater. Thus, the final package would have  at most one of $t_1$ and $t_2$, and hence be suboptimal.
We solve this problem by maintaining duplicates $t_p^{(1)}$ and $t_p^{(2)}$. Then, as with the actual tuples, a sketch solution could contain these duplicates, and not violate the constraint. 
After $P$ is refined, the solver would have the option of using both tuples $t_1$ and $t_2$ in the final package solution. In general, tuples within a partition might be correlated; in this case, the duplicates would also be correlated to accurately approximate the risk of using tuples in the partition when forming a package. We emphasize that, because of the refinement step, actual tuples, and not duplicates, appear in the final package solution. 
We provide an ablation study that empirically demonstrates the deleterious effect of not using duplicates~\appref{sec:ablation_dup}.}

\smallskip\noindent\textbf{Choosing the number of duplicates.} We assume that we have an upper bound $\Pmax$ on the number of distinct tuples in a package. This can often be derived 
from query constraints such as \lstinline[mathescape]!COUNT(*) $\leq \Pmax$!, or \lstinline[mathescape]!SUM(A) $\leq V$!,
or on domain knowledge.
In an extreme case, all of the $\Pmax$ tuples in the optimal package might belong to the same partition; if each representative has $\Pmax$ duplicates and if the optimal tuples are close to the representative, the sketch will yield a validation-feasible package. However, since each duplicate induces a corresponding decision variable, solver runtimes can be expensive.

We therefore aim to use $d_t<\Pmax$ duplicates for each representative $t$. Doing so incurs some risk: if the optimal package contains $\Pmax$ tuples and all of these tuples come from $t$'s partition, then the average multiplicity of each duplicate in the sketch solution is $\Pmax/d_t$. As we have seen, assigning a multiplicity $\ge 2$ to any duplicate results in an increased probability of each risk constraint $r\in R$ being violated. Specifically, for a given constraint $r$ that imposes a lower bound on a risk metric (VaR or CVaR), the risk value of the package based on $d_t$ duplicates in the sketch will be lower than that of the package formed using $\Pmax$ duplicates and hence more likely to violate the lower bound. We measure the relative increase in risk with respect to $r$ due to using $d_t$ duplicates instead of $\Pmax$ duplicates by $\gamma(r,d_t)=\bigl(\text{Risk}_r(S_\text{max}) - \text{Risk}_r(S_{d_t})\bigr)/|\text{Risk}_r(S_\text{max})|$, 
where $S_\text{max}$ and $S_{d_t}$ are sketches consisting of $\Pmax$ and $d_t$ duplicates of $t$, and $\text{Risk}_r$ is the VaR or CVaR function that appears in constraint~$r$. Note that $\gamma(r,d)$ increases with decreasing $d$. We determine the number of duplicates $d_t(r)$ as the smallest $d$ such that $\gamma(r,d) \le \Gamma$, where $\Gamma$ is a specified \emph{risk tolerance ratio}. We then set $d_t=\min(\max_{r\in R} d_t(r), |P_t|)$, where $|P_t|$ is the number of tuples in the $t$'s partition. \revb{(Thus the number of duplicates is upper-bounded by the number of tuples in $P_t$.)}
The function \textsc{NumberOfDuplicates} in line~\ref{li:numdupes} performs these calculations.

Note that, as $\Gamma$ decreases, the number of required duplicates $d$ increases. Using bisection, we initially choose $\Gamma\in[0,1]$ such that $\sum_{t \in T^\text{R}}d_t \le \max(\tau, |T^\text{R}|)$, where $T^\text{R}$ is the set of representatives and $\tau$ is an upper bound on the 
total number of duplicates (line~\ref{li:gammainit}).
The bound $\tau$ is set such that in-memory ILP solvers like \textsc{Gurobi} can solve problems with $\tau$ tuples within an acceptable amount of time. The number of tolerable decision variables can vary due to underlying hardware attributes of the system, differences in the solver software, runtime versus quality requirements, and so on. 

\smallskip\noindent\textbf{Choosing the correlation between duplicates.} 
We have defined duplicates to be distinct but stochastically identical, by which we mean that for a set $t^{(1)},\ldots,t^{(d)}$ of duplicates and each stochastic attribute $A$, the marginal distributions of $t^{(1)}.A, \ldots,t^{(d)}.A$ are the same. To motivate the notion of duplicates we gave a small example using two statistically independent duplicates having a common $N(0,1)$ marginal distribution. In general, however, assuming mutual independence among duplicates ignores correlations between the actual tuples within a partition, which can cause problems in the refine phase. E.g., if all tuples in the partition are highly correlated, then sums of the form $\sum_i t_i.A*x_i$ are subject to more severe fluctuations than sums involving independent $t_i.A$'s, and the risk of violating VaR or CVaR constraints is higher. Thus, a sketch package with independent duplicates for a given partition may underestimate the risks of including actual tuples from that partition, and no feasible solution may be found during refine, when the solver tries to replace independent duplicates with positively-correlated tuples. Duplicates of each representative should therefore roughly mirror the correlation between tuples in the partition.

We use the median $\overline{\rho}$ of the pairwise Pearson correlation coefficients between the representative and all other tuples in its partition as the correlation coefficient between each pair of duplicates (line~\ref{li:mediancorr}). 
Using the median ensures that the number of tuples having a higher correlation with the representative relative to the duplicates equals the number of tuples having a lower correlation with the representative. Thus, if one of the duplicates is replaced by a tuple having a higher correlation coefficient than the median, thereby increasing risks compared to the sketch package, the solver will have the chance to balance it out by taking another similarly optimal tuple that has a lower correlation coefficient. (We actually take the maximum of $\overline{\rho}$ and 0 so that the correlation matrix for the duplicates is positive semi-definite. Our clustering method---see Section~\ref{sec:partitioning}--- ensures that that the median value is virtually always nonnegative so that no correction is needed.)

\smallskip\noindent\textbf{Computing the sketch package.} \looseness-1
As discussed, \textsc{SolveSketch} initially sets $\gamma$ in line~\ref{li:gammainit} so as to ensure at most $\tau$ duplicates
overall. In lines~\ref{li:genloop}--\ref{li:gendupes} the initial set of duplicates are created to form an initial version of the sketch $S_0$. Then a slight variant of \lcsolve is used to try and solve $Q(S_0)$ (line~\ref{li:solvesketch}). This variant, \lcsolveS, is almost identical to \lcsolve, except that the process of increasing the number of scenarios is now controlled by the calling function \textsc{SolveSketch}. Specifically, if $\textsf{NeedScenarios}=\textsf{True}$ as in lines~\ref{li:needScenariosalpha} or \ref{li:needScenariosV} of \lcsolve, then a \textsf{NeedScenarios} indicator variable is set to \textsf{True} and returned to \textsc{SolveSketch}. Moreover, lines~\ref{li:genscenariosA}--\ref{li:genscenariosC} in \lcsolve are replaced by a simple ``\textbf{return} \textsf{NULL}'' statement. Thus if the need for more scenarios is detected during an $\alpha$-search or $V$-search within \lcsolveS, this information is immediately sent to \textsc{SolveSketch}, which then increases the number of scenarios (lines~\ref{li:immediatescenariosA} and \ref{li:immediatescenariosB}). If \lcsolveS returns \textsf{NULL}, i.e., if the number of optimization scenarios appears adequate but an $\epsilon$-optimal and validation-feasible solution cannot be found while attempting to solve the sketch problem, then $\Gamma$ is decreased (line~\ref{li:gammaDecr}) and the resulting, larger sketch is tried. If $\Gamma = 0$ so that a decrease is impossible, we double the number of optimization scenarios (line~\ref{li:reiterate}) and try again, keeping $\Gamma=0$ since we know that we will need a lot of duplicates. Note that we try decreasing $\Gamma$---thereby adding duplicates---before we increase the number of scenarios because generating more optimization scenarios can incur significant computational costs, and is hence deferred as long as possible.

\subsection{The Refine Phase} \label{subsec:refine}

The refinement process is very similar to that of the deterministic \skr algorithm, so we give a brief overview and refer to reader to \cite{brucato2018package} for further details. The process iteratively replaces each synthetic representative selected in the sketch package with actual tuples from its own partition.
First, using the `Best Fit Decreasing' algorithm for bin-packing~\cite{garey1979computers}, the \textsc{Refine} algorithm bins the partitions with tuples in the sketch package into a near-minimum number of `partition groups' such that total number of tuples in each partition group is at most $\tau$. Then, in each step, it replaces all the selected duplicates from one of the partition groups, while preserving both the selected duplicates from the as yet unrefined partitions and the tuples selected during previous refinements. 

\smallskip\noindent\textbf{Number of optimization scenarios.} \textsc{Refine} uses as many optimization scenarios as were used to derive the sketch package. The intuition is that if $m$ optimization scenarios were enough to create a satisfactory package from the representative duplicates, they should suffice for doing the same from the actual tuples, since each tuple should be similar to their representatives.

\smallskip\noindent\textbf{Refinement order.}
Starting from a sketch package with tuples in $k$ distinct partitions that are binned into $b$ groups, there are $b!$ possible orders in which the partitions can be refined. Using an `incorrect' order can lead to an infeasible intermediate ILP, or one whose package solution is far less optimal than the sketch package. \textsc{Refine} attempts to select a correct order using a ``greedy backtracking'' technique as in~\cite{brucato2018package}. The idea is to start with a randomly selected permutation of partition groups, i.e., refinement order. If an intermediate refinement fails to find a validation-feasible 
package whose objective value is within $(1\pm\epsilon)$ of
the objective value of the sketch package, we ``undo'' the previous refinement and greedily attempt to refine the failing partition group in its place. We keep undoing the previous refinements until the partition group can be successfully refined. If the group cannot be refined even when it is moved back to the first position in the refinement order, then the stochastic behaviour of the tuples within the partition group is not adequately captured by the behavior of the duplicates for the partition. This can happen if the correlations among the duplicates are too small (Section~\ref{subsec:sketch}), so we increase the common correlation coefficient for each of those partitions from $\overline{\rho}$ to $\overline{\rho}+\Delta_{\rho}$ (we take $\Delta_{\rho}=0.1$). This discourages the sketch solver from taking more tuples from these partition groups. Afterwards, we re-execute the sketch query, and refine the newly obtained sketch package. Greedy backtracking continues to explore different orders in which the groups may be refined until an acceptable final package is found.

\smallskip\noindent\textbf{Refinement operation.}
At each step, \textsc{Refine} selects an unrefined partition group with at least one duplicate appearing in the sketch package. The refinement operation involves solving an ILP corresponding to $Q(S_c\cup S_p\cup S_u)$, where $Q$ is the SPQ of interest, $S_c$ comprises all of the actual tuples for the partition group currently being refined, $S_p$ comprises package tuples remaining from previously-refined groups, and $S_u$ comprises the duplicates from all as yet unrefined groups. The constraints of the ILP include all constraints in $Q$ (including the linearized risk constraints), as well as additional constraints ensuring that the multiplicities of the tuples in $S_p$ and $S_u$ remain unchanged. Thus the only change to the package is to replace the duplicates for the current group with zero or more actual tuples from its partitions. We use the variant \lcsolve-R to formulate and solve the ILP. The only differences between \lcsolve and \lcsolve-R are that (i)~in line~\ref{li:getUB} the upper-bound constant $\omega_0$ is instead chosen as the objective value $\Bar{\omega}$ corresponding to the package solution of the sketch problem $Q(S_0)$ produced by \solvesketch, (ii)~in line~\ref{li:genscenariosA}, NULL is returned rather than the best feasible solution so far and (iii)~the number of optimization scenarios is not increased. Thus a non-NULL package returned by \lcsolve-R will be validation-feasible with an objective value $\omega$ that satisfies $\omega\ge (1-\epsilon)\Bar{\omega}$.

\section{Stochastic partitioning}\label{sec:partitioning}



\looseness-1
\sskr needs tuples in a relation to be partitioned into sufficiently small groups of similar tuples prior to query execution. Specifically, 
each refine ILP has at least as many decision variables as tuples in the partition being refined. We therefore constrain the number of tuples in every partition to a given size threshold $\tau$.
The parameter $\tau$ is as in Algorithm~\ref{alg:sketch} and is chosen 
to ensure reasonable ILP-solver solution times.
Also, \sskr requires high inter-tuple similarity within partitions. During sketch, every tuple in a partition is represented by a single representative, and a representative that is not sufficiently similar to all the tuples in its partition may result in sketch packages that cannot be refined to feasible and near-optimal packages. 
Finding a suitable representative can be hard if tuples within a partition are too dissimilar with respect to their attributes.

\smallskip
\noindent
\textbf{Prior Clustering Methods.} Existing uncertain data clustering algorithms that cluster similar stochastic tuples together do not ensure the resulting clusters will satisfy size thresholds~\cite{chau2006uncertain, gullo2008clustering, gullo2012uncertain, gullo2013minimizing}. Although some hierarchical partitioning approaches~\cite{gullo2008hierarchical, zhang2017novel} can be modified to repeatedly repartition the clusters until no cluster has more than $\tau$ tuples, their runtime complexities are super-quadratic with respect to the number of tuples in the relation, making them unsuitable for fast partitioning of large relations. We therefore introduce \dpart, a partitioning algorithm having sub-quadratic time complexity and ensuring that no partition in a stochastic relation has more than $\tau$ tuples. We provide a high-level overview of the working principles of \dpart here, and refer interested readers to 
\appref{appendix:dpart_details}
for more details.

\smallskip
\noindent
\textbf{Diameter Thresholds.} To ensure that tuples within a partition are sufficiently similar, \dpart ensures that the ``distance'' (as defined below) between any pair of tuples in a partition with respect to each attribute $A$ is upper-bounded by a \emph{diameter threshold} $d_{A}$.
Values of $d_{A}$ for every attribute $A$ can be chosen by the user
based on runtime requirements.
Smaller values of $d_{A}$ produce tighter partitions whose representatives better reflect the stochastic properties and deterministic values of other tuples in their partitions, leading to better quality solutions, 
but also increase the number of partitions, thereby increasing the number of decision variables in the sketch ILP and consequently increasing runtime. In our experiments, we chose values of $d_{A}$ that keep the total number of partitions to approximately within $[\frac{\tau}{10}, \frac{\tau}{2}]$, to allow some wiggle room for Sketch to create duplicates without exceeding the size threshold $\tau$ or incurring excessive runtimes.

\smallskip
\noindent
\textbf{Inter-tuple distances.} \looseness-1
We calculate the distance between any two tuples $t_1$ and $t_2$ w.r.t.\ an attribute $A$ using their Mean Absolute Distance (\textsc{MAD}), which we define as $\textsc{MAD}(t_1.A, t_2.A)=\mathbb{E}[|t_1.A-t_2.A|]$. Thus, if $A$ is deterministic, then $\textsc{MAD}(t_1.A, t_2.A)=|t_1.A-t_2.A|$; if $A$ is stochastic, then we estimate MAD as the average of $|t_1.A-t_2.A|$ over a set of i.i.d.\ scenarios. \reva{We assume throughout that $\mathbb{E}[|t.A|]<\infty$ for all tuples $t$ and attributes $A$, so that the MAD is always well defined and finite.} Unlike other metrics such as Wasserstein distance~\cite{panaretos2019statistical} and KL-divergence~\cite{shlens2014notes}, which only consider similarities between probability density functions, MAD inherently accounts for correlations between tuples. Given $t_1$ and $t_2$ with similar PDFs for attribute $A$, the distance $\textsc{MAD}(t_1.A,t_2.A)$ is smaller when $t_1$ and $t_2$ are positively correlated than when they are independent; the case study in \appref{appendix:mad_case_study}
gives a concrete example of this phenomenon.
If inter-tuple MADs within a partition are small, then the tuples in the partition will have similar values across scenarios, allowing a representative to closely reflect their stochastic behavior. \reva{This property is formally established via Theorem~\ref{th:mad_cvar}, which shows that bounding the MAD between two tuples also bounds the difference in their CVaRs at the tails of their distributions.   
\begin{theorem}
Suppose the MAD between tuples $t_1$ and $t_2$ w.r.t. an attribute $C$ is bounded by $d_C$, i.e., $\mathbb{E}[|t_1.C-t_2.C|] \le d_C$. Then, the difference between the CVaRs of $t_1$ and $t_2$ in their lower $\alpha$-tail is bounded: $|\text{CVaR}_{\alpha}(t_1.C) - \text{CVaR}_{\alpha}(t_2.C)| \le \frac{d_C}{\alpha}$
\label{th:mad_cvar}
\end{theorem}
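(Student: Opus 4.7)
The plan is to use the Rockafellar--Uryasev variational representation of CVaR together with the 1-Lipschitz property of the positive-part function to transfer an $L^1$ bound on two random variables into a bound on the difference of their CVaRs. First, I would invoke the well-known identity
\[
\text{CVaR}_\alpha^\bot(X) = \sup_{c \in \mathbb{R}} \Bigl(c - \tfrac{1}{\alpha}\mathbb{E}[(c - X)^+]\Bigr),
\]
valid whenever $\mathbb{E}[|X|] < \infty$, which holds for both $t_1.C$ and $t_2.C$ by the paper's standing assumption that all random attributes are integrable.

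Next, I would exploit the fact that $x \mapsto (c-x)^+$ is 1-Lipschitz for every fixed $c$. This gives the pointwise inequality $\bigl|(c - t_1.C)^+ - (c - t_2.C)^+\bigr| \le |t_1.C - t_2.C|$ almost surely, and taking expectations yields
\[
\bigl|\mathbb{E}[(c - t_1.C)^+] - \mathbb{E}[(c - t_2.C)^+]\bigr| \le \mathbb{E}[|t_1.C - t_2.C|] \le d_C,
\]
uniformly in $c$. Combining with the variational formula, for any $c \in \mathbb{R}$,
\[
c - \tfrac{1}{\alpha}\mathbb{E}[(c - t_1.C)^+] \ge \Bigl(c - \tfrac{1}{\alpha}\mathbb{E}[(c - t_2.C)^+]\Bigr) - \tfrac{d_C}{\alpha}.
\]
Taking the supremum over $c$ on both sides yields $\text{CVaR}_\alpha^\bot(t_1.C) \ge \text{CVaR}_\alpha^\bot(t_2.C) - d_C/\alpha$, and swapping the roles of $t_1.C$ and $t_2.C$ gives the reverse inequality, producing the claimed bound.

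The main conceptual step is recognizing that the Rockafellar--Uryasev representation is the right tool: it expresses CVaR as a supremum over affine functionals of $\mathbb{E}[(c-X)^+]$, for which the 1-Lipschitz property of the positive part directly produces the desired $L^1$-to-CVaR transfer, with the $1/\alpha$ factor falling out naturally from the coefficient in the representation. A more cumbersome alternative would be to start from the quantile definition $\text{CVaR}_\alpha^\bot(X) = \frac{1}{\alpha}\int_0^\alpha q_u(X)\,du$ and try to bound $|q_u(t_1.C) - q_u(t_2.C)|$ directly, but pointwise quantile differences are not in general controlled by the MAD; only an integral average is, and the variational formula packages this control cleanly without ever needing to reference quantiles.
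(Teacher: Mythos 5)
Your proof is correct, but it takes a genuinely different route from the paper's. You invoke the Rockafellar--Uryasev variational representation $\text{CVaR}^\bot_\alpha(X)=\sup_{c}\bigl(c-\tfrac{1}{\alpha}\mathbb{E}[(c-X)^+]\bigr)$ and push the $L^1$ bound through the $1$-Lipschitz positive-part map; the paper instead argues directly with conditional expectations: it sets $E_i=\{t_i.C\le q_\alpha(t_i.C)\}$, shows $\Pr(E_1\setminus E_2)=\Pr(E_2\setminus E_1)$ and hence $\mathbb{E}[t_2.C\mid E_1]\ge\mathbb{E}[t_2.C\mid E_2]=\text{CVaR}_\alpha(t_2.C)$ (swapping the conditioning event can only move mass above the quantile), and then bounds $\mathbb{E}[|t_2.C-t_1.C|\mid E_1]\le\mathbb{E}[|t_2.C-t_1.C|]/\Pr(E_1)=d_C/\alpha$. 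The two arguments yield the same constant, but they trade off differently: the paper's proof is elementary and self-contained, needing only the conditional-expectation form of CVaR (and hence the standing continuity assumption so that $\Pr(E_i)=\alpha$ exactly), whereas your dual-representation proof requires citing the variational identity but is cleaner, avoids any event-swapping bookkeeping, and remains valid for non-continuous distributions, since $\sup_c\bigl(c-\tfrac{1}{\alpha}\mathbb{E}[(c-X)^+]\bigr)$ equals $\tfrac{1}{\alpha}\int_0^\alpha q_u(X)\,du$ without atoms-related corrections. Your closing remark is also on point: pointwise quantile differences are not controlled by the MAD, and both proofs succeed precisely because they only ever use an averaged (integral or conditional) form of the lower tail.
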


We give a formal proof in \appref{appendix:APSproof}. Intuitively, consider what happens if the theorem's conclusion is false
in an SAA setting: If the average value of $|t_1.C-t_2.C|$ exceeds $\frac{d_C}{\alpha}$ in any set of $\lfloor\alpha m \rfloor$ lowest-valued scenarios, 
then even if there is no (absolute) difference between their values in the remaining scenarios, $\mathbb{E}[|t_1.C-t_2.C|] > \alpha\frac{d_C}{\alpha} = d_C$, thus contradicting the hypothesis}. 
For efficiency in estimating MAD, \dpart generates a fixed set of scenarios before partitioning to avoid repeatedly generating scenarios on the fly. Generating more scenarios makes the estimation of MAD more accurate for stochastic attributes; in our experiments, we precomputed a set of $200$ scenarios. See \appref{appendix:mad_estimation}
for insight on how many scenarios are needed to obtain statistically accurate estimates of MAD.
\revb{Note that MAD is only defined w.r.t. a single attribute. We require diameter thresholds to be defined for each attribute separately, and the MAD between any two tuples must be lower than the diameter thresholds for all attributes in order for them to be in the same partition. Crucially, use of MAD allows us to formally guarantee that \sskr achieves a $(1-\epsilon)^2$-optimal solution; see \appref{sec:theory}.}

\smallskip
\noindent
\textbf{Triggering the partitioning of a set.} \dpart recursively partitions sets of stochastic tuples until no size or diameter constraints are violated. The size constraint is violated if a set has more than $\tau$ tuples. Exactly determining the diameter of a partition for an attribute $A$ would require estimating the MAD between every pair of tuples in the set.
To avoid this quadratic complexity, we use a conservative approach which exploits 
the fact that MAD, because it is based on the $L_1$ norm, inherits the triangle inequality: $\textsc{MAD}(t_1.A, t_3.A) \le \textsc{MAD}(t_1.A, t_2.A) + \textsc{MAD}(t_2.A, t_3.A)$. In an operation we call \pvscan, we find the distance of every other tuple in the set from a randomly chosen tuple $t$. Hence, if the distance of the farthest tuple $\dot t_{A}$ from $t$ is $\dot d_A$, the distance between any two tuples in the partition is bounded by $2\dot d_A$. \revb{If $2\dot d_A \le d_A$ is true for each attribute $A$, all the diameter constraints are necessarily satisfied}. If any constraint is violated, further partitioning is triggered.

\smallskip
\noindent
\textbf{\pvscan-based partitioning.} 
To partition further, \dpart (1)~executes a \pvscan for each attribute from a random tuple $t$ to identify the attribute $A^*$ with the highest diameter-to-threshold ratio: $A^*=\argmax_{A}2\dot d_A/d_A$, 
(2)~runs a second scan over all tuples in the set, but this time calculates their distances from the farthest tuple $\dot t_{A^*}$ found in the first scan, and (3)~stores a list of tuple IDs in increasing order of distance from $\dot t_{A^*}$. Further partitioning is done by segmenting the list and creating a sub-partition out of each segment. Specifically, 
if the size constraint is violated, then we partition the list into contiguous segments each containing $\le \tau$ tuples. Each resulting segment thus satisfies the size constraint.
If a diameter constraint is violated by any resulting segment (according to the conservative test described previously), it is recursively partitioned using distance-based partitioning. Let $\ddot d_{A^*}$ be the distance of the farthest tuple in the sub-partition from $\dot t_{A^*}$. For distance-based partitioning, we create $\lceil \ddot d_{A^*}/d_{A^*} \rceil$ sub-partitions where the $i$-th sub-partition contains all tuples within distance $[(i-1)\cdot d_{A^*}, i\cdot d_{A^*}]$. Each sub-partition is then recursively partitioned until no diameter constraints are violated. \pvscan is embarassingly parallel, and, after the initial size-based partitioning, diameter-based partitioning can be conducted in parallel for each partition. \dpart can hence be accelerated with multi-core processing.

\smallskip
\noindent
\textbf{Representative Selection.} After partitioning, we select a representative tuple for each partition. The value of each deterministic attribute of the representative tuple is equal to the mean of that attribute among every tuple in the partition. For stochastic attributes, naively computing a ``mean distribution'', e.g. as a weighted mixture, is computationally expensive and would require excessive storage.
Thus the distribution of each stochastic attribute of the representative is made equal to that of a chosen tuple in the partition, namely the tuple with the lowest mean distance to every other tuple. 

Naively searching for the tuple with the lowest mean distance would require quadratic MAD estimations.  Instead, we propose the total \emph{worst-case replacement} cost heuristic for selecting representatives. First, we compute the minimum and maximum values of an attribute $A$ over all the tuples in each scenario. In a given scenario, we define the worst-case \emph{replacement cost} of $t.A$ for a tuple $t$ as the greater of its absolute difference with that scenario's minimum or maximum. The cost for $t.A$ is obtained by summing its replacement costs over all the scenarios. The representative variable for $A$ is then chosen as that of the tuple with the minimum total cost. We provide pseudocodes for the \dpart and representative selection routines in \appref{appendix:dpart_details} and \appref{appendix:rep_details}.  

This representative tuple selection scheme assumes that values for different stochastic attributes are generated independently. See \appref{appendix:correlated_attributes} for a description of the minor modifications to representative selection and scenario generation 
when the stochastic attributes within a tuple can be statistically correlated.
\section{Experimental Evaluation} \label{sec:experimental_evaluation}

We show that (1)~\lcsolve produces packages of comparable quality as \ssearch in an order of magnitude lower runtime and (2)~\sskr scales to significantly larger data sizes than both of these approaches. \revc{Additional experiments are given in \appref{appendix:addlExpts}, including ablation studies justifying the use of duplicates and establishing the  superiority of \dpart relative to other partitioning schemes, as well as an experiment verifying the robustness of our package solutions.}
\vspace{-0.5em}
\subsection{Experimental Setup}

\textbf{Environment.} 
All approaches are implemented in Python 3.12.14. We use Postgres 16.1 as the supporting DBMS, and Gurobi 11.0.3 as the base solver.
We ran our experiments on a 2.66 GHz 16-core processor with 16 GB of RAM.
For each experiment, we report average results and standard deviation error bars over 16 runs.

\smallskip
\noindent
\textbf{Datasets.} 
We use two datasets in our experiments: (1)~We construct a stock investments table (Figure~\ref{fig:portfolio-builder}) using historical NASDAQ, NYSE, and S\&P-500 data~\cite{portfolio} with up to 4.8M tuples. We model gains using Geometric Brownian Motion with drift and volatility estimated from historical stock prices for 3289 companies. We vary holding periods from a half a day to 730 days.
(2)~We generate up to 6M LineItem tuples from the TPC-H V3 benchmark~\cite{poess2000new}.
We add Gaussian noise ($\mu_{\text{noise}} \sim \mathcal{N}(0, 1)$, $\sigma^2_{\text{noise}} \sim \text{Exp}(2)$) to price and quantity to introduce stochasticity.
Further details are in \appref{appendix:datasets}.


\smallskip
\noindent
\textbf{Workloads.} 
We generate a workload of stochastic package queries following the methodology of~\cite{mai2023scaling}: We vary the constraint bounds of the SPQs to generate queries of varying \emph{hardness} (H), measured as the negative log likelihood of the probability that a random package is feasible for that query; see \appref{appendix:hardness}.
For stocks, the queries find portfolios that maximize expected total gains, such that total price is below a fixed budget and total gains exceed a minimum amount with a given high probability. 
For TPC-H, the queries find packages of items that maximize expected total prices, such that total price exceeds a minimum bound with a given high probability, total quantity shipped is below a maximum amount with a given high probability, and total taxes are below a fixed amount. All packages have a size constraint of at most 30 tuples. 
We provide complete workload details in \appref{appendix:hardness}.

\begin{figure*}[t]
    \centering
    \includegraphics[width=\linewidth]{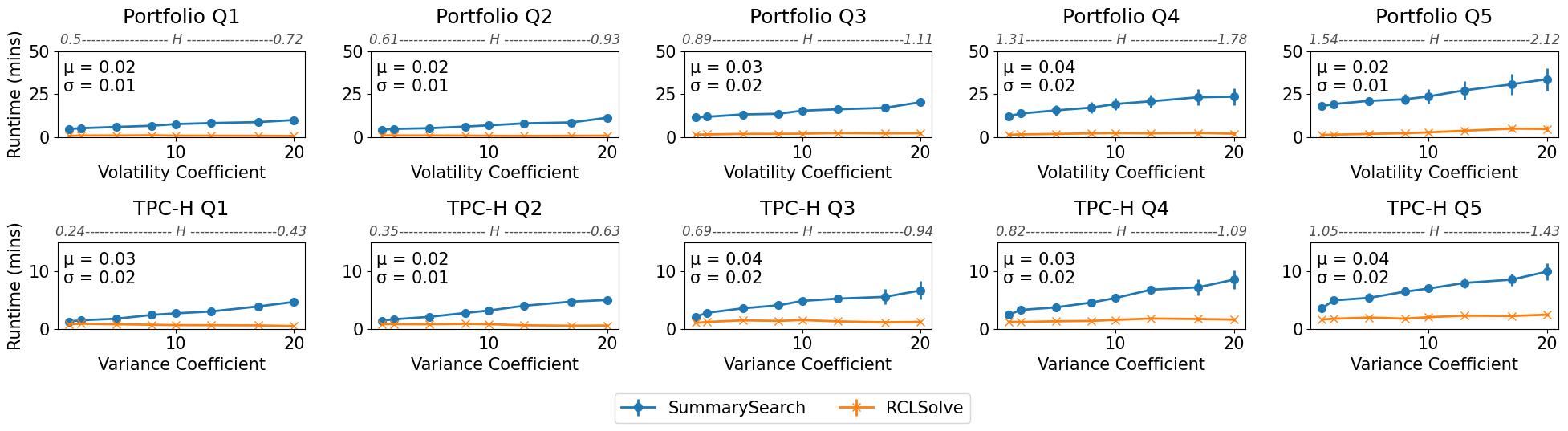}
    \vspace{-7mm}
    \caption{Increasing variance or volatility coefficients increases tuple uncertainty as well as query hardness. ($H$ range reported with each plot).  \lcsolve is faster than \ssearch especially at high variances.  
    In each plot, $\mu$ and $\sigma$ report the mean and standard deviation, respectively, of the relative integrality gap for \lcsolve's packages.
    }
    \label{fig:inc_vars}
\end{figure*}

\smallskip
\noindent
\textbf{Hyperparameters.} 
We set the number of initial optimization ($m)$ and validation scenarios ($\hat{m}$) to $100$ and $10^6$, respectively, for both \ssearch and \lcsolve. We use $\epsilon = 0.05$ as the approximation error bound, and set $\delta=10^{-2}$ as the bisection termination threshold for \lcsolve.
For \sskr, we set $P_\text{max}$ to 30, and during Sketch, we decrease the relative risk tolerance by $0.03$ in each iteration, i.e., $\Delta\Gamma = 0.03$.
%
%
We set the partitioning size threshold $\tau=10^5$, 
as Gurobi solves randomly-generated, satisfiable ILPs with 3 constraints in roughly one minute at this scale.
We set the diameter thresholds for the Portfolio dataset as $d_{\text{price}} =10$, $d_{\text{gain}} = 100$, and for TPC-H as $d_{\text{price}}=50$, $d_{\text{quantity}}=5$, $d_{\text{tax}}=0.05$. \revc{\appref{sec:hyper} contains details on how our results are not sensitive to small changes in parameter values and provides guidance on finding appropriate hyperparameter settings for different datasets.}

\smallskip
\noindent
\textbf{Metrics.} We report wall-clock \textit{query run times} and the \textit{relative integrality gap} $(\omega-\omega^*)/\omega^*$, where $\omega$ is objective value of the returned package and $\omega^*$ is that of the best package found on a relaxation of the query with integrality constraints removed. The latter ``best'' package is one found by either \lcsolve, \ssearch, or (on data sets with fewer than 40K tuples) na\"ive~\cite{BrucatoYAHM21}. We report the means ($\mu$) and standard deviations ($\sigma$) of the relative integrality gaps of packages formed by \sskr (on larger relations) and \lcsolve (on data sets with fewer than 40k tuples). 


\subsection{Main Results: Scalability and Optimality}

\textbf{Increasing uncertainty.}
\sloppy We evaluated our novel \lcsolve method against the state of the art in stochastic package query evaluation, \ssearch.  As we noted in Section~\ref{sec:intro}, \ssearch does not work well with high-variance stochastic attributes.  In Figure~\ref{fig:inc_vars}, we demonstrate the performance of the two methods as the uncertainty in the stochastic attributes increases, while keeping the datasets small (40K tuples for Portfolio and 20K for TPC-H). We control uncertainty by modifying the volatility  of the GBM model of a stock's gain, and the variance of Gaussian noise added to price and quantity in TPC-H.  Increasing uncertainty also increases query hardness, as indicated by the hardness ranges reported in each plot. \lcsolve is significantly faster than \ssearch and this difference is more pronounced with high uncertainty and increased hardness.  The reason is that \lcsolve generates fewer scenarios than \ssearch (approximately 6x fewer scenarios on average for the highest variance / volatility settings).  Moreover, with many scenarios, \ssearch not only solves more, but also harder optimization problems, that Gurobi takes minutes to solve.


The runtime gains of \lcsolve do not come at the cost of quality.  The packages it produced have a relative integrality gap in the 0.94--0.99 range, same as \ssearch.


\smallskip
\noindent
\fbox{
\parbox{0.96\columnwidth}{
\emph{Key takeaway:} 
   Our novel risk linearization approach is superior to the state of the art, by maintaining robust quality and fast performance in cases of increased data uncertainty.
}}

\begin{figure*}[t]
    \centering
    \includegraphics[width=\linewidth]{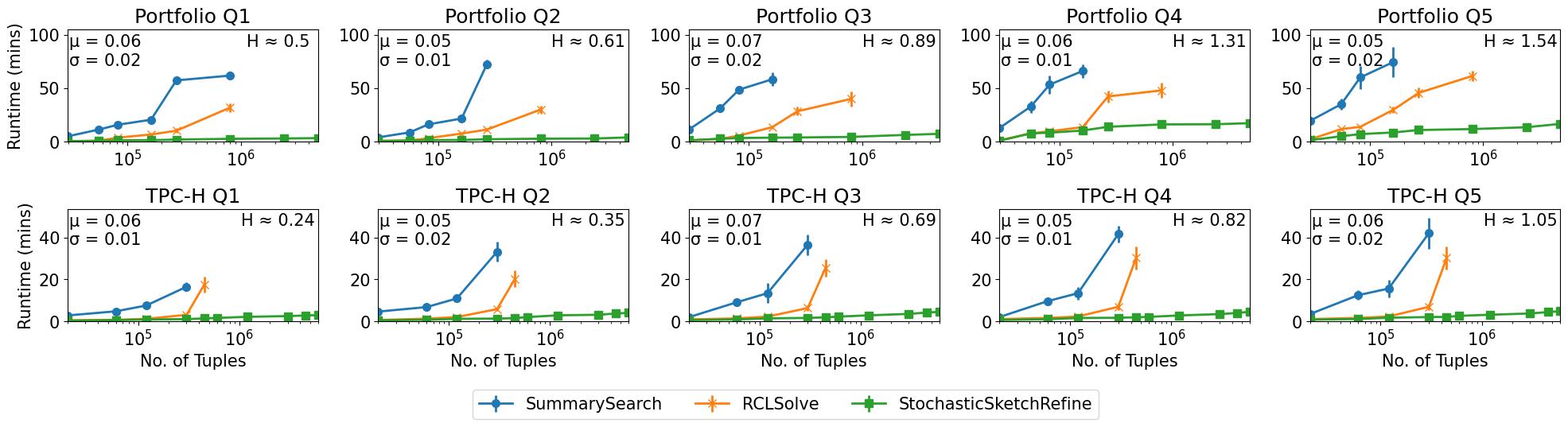}
    \vspace{-7mm}
    \caption{\lcsolve continues to outperform \ssearch but fails to scale beyond 1M tuples. The absence of data points indicates that no solutions were found within 1.5 hours. \sskr scales well as data size and query hardness ($H$) increase.
    Each plot shows the relative integrality gap statistics ($\mu, \sigma$) for \sskr's packages.
    }
    \label{fig:inc_tuples}
\end{figure*}

\smallskip
\noindent
\textbf{Increasing data size.} 
While we saw \lcsolve outperform \ssearch, it alone cannot scale to larger data sizes.  \sskr provides the evaluation mechanism to achieve this scaling, while using \lcsolve as a building block.  In the experiment of Figure~\ref{fig:inc_tuples}, we evaluate our query workload with all three methods, while increasing the data size to several million tuples.  While \lcsolve continues to outperform \ssearch, their runtimes increase sharply when the data size is higher than $\tau$, and both methods fail to produce results on data larger than 1M tuples (or less in some cases), due to time-outs (runtime exceeds 1.5 hours) or crashes (Gurobi runs out of memory).  In contrast, \sskr maintains remarkably stable performance, thanks to its divide-and-conquer approach, ensuring that ILP subproblems fit within the available memory. On relations with size less than $\tau$, \sskr and \lcsolve have identical performance as \sskr makes a single \lcsolve call with the entire relation.  The runtime benefits do not impact quality, and \sskr produces packages that approximate the continuous relation of each SPQ with ratios ranging from $0.92$ to $0.97$.  For the cases that \ssearch and \lcsolve produced results, those were also of high quality, with approximation ratios in the 0.95--0.98 range.


Query run times are higher for the Portfolio workload than TPC-H due to the expensive nature of sampling from a Geometric Brownian Motion model than from a Guassian one. Even so, since \sskr requires fewer optimization scenarios and does not generate scenarios from all tuples at once, its runtime on the portfolio datasets stays under 20 minutes for all queries.\footnote{While validation scenarios are larger in number, they only need to be generated for few tuples, so they do not significantly impact runtime.}

\smallskip
\noindent
\fbox{
\parbox{0.96\columnwidth}{
\emph{Key takeaway:} 
   \sskr demonstrates remarkable scaling to very large data, significantly outperforming the state of the art.
}}

\section{Related Work} \label{sec:related_works}

\textbf{Probabilistic Databases}~\cite{CA22,dalvi2007efficient, JampaniXWPJH11} specialize in representing data with stochastic attributes. Prior work has focused primarily on supporting SQL-type queries over uncertain data, allowing for ad hoc what-if analyses, as opposed to the in-database optimization that we support, which systematically searches through the space of possible decisions to find the optimal course of action.


\textbf{In-database decision-making} pushes decision making-related activities closer to where the data resides, and thus simplifies decision-making workflows and reduces unnecessary data movement. The work in \cite{CA22} supports what-if analysis over historical data, but does not support full-scale optimization.
SolveDB~\cite{vsikvsnys2016solvedb} and its successor, SolveDB+~\cite{siksnys2021solvedb+} integrate a variety of black-box optimizers and their functionalities into a DBMS. However, they do not scale to large problems; presumably some package query solving techniques could be incorporated into SolveDB+.
\reva{
\skr~\cite{brucato2018package} scales \emph{deterministic} package queries using a partitioning and divide-and-conquer strategy similar to \sskr.  However, the latter introduces critical nuances (e.g., the duplicate representatives) to handle uncertainty and correlations.
} 
Recently, \pshading~\cite{mai2023scaling} was able to improve over \skr using a novel hierarchical partitioning scheme together with customized ILP solvers. An interesting direction for future work would be to extend this approach to stochastic data.


\textbf{Stochastic Optimization} has often used Monte Carlo sampling to estimate uncertain attributes that cannot be quantified exactly~\cite{homem2014monte}. Constrained optimization involving probabilistic constraints can be hard to handle, as they can make the feasible regions non-convex~\cite{ahmed2008solving}. Sample Average Approximation generates many scenarios to quantify these constraints using sample values~\cite{kim2015guide}. 
Prior methods have often attempted to address this difficulty by reducing the number of scenarios~\cite{BrucatoYAHM20, campi2011sampling}. 
\reva{\ssearch~\cite{BrucatoYAHM20} does so via use of conservative summaries as described previously, restricting the feasible region improve feasibility.
However, the feasible regions for the optimization problems remain nonconvex in general, and the search for an optimal set of summaries can be time consuming.} Other methods have approximated non-convex risk constraints using their convex CVaR counterparts, but these approaches incur a quadratic time complexity for doing so~\cite{bubeck2015convex}. Our \textsc{RCL-Solve} approach effectively summarizes all of the information contained in numerous scenarios with one linear constraint in the ILP, and finds the \reva{best set of L-CVaR constraints
within a logarithmic number of iterations using bisection.}

\section{Conclusion and Future Directions} 
\label{sec:conclusion}
In this work, we propose \textsc{RCL-Solve}, an improved SPQ solver that converts non-convex stochastic constrained optimization problems into ILPs whose size is independent of the number of scenarios. With \sskr, we tackle scalability challenges in constrained optimization caused by the number of tuples in large relations. We further introduce \dpart, an efficient approach for partitioning probabilistic relations. Together, these novel methods allow users, in the face of uncertainty and large datasets, to quickly compute near-optimal packages that satisfy a given set of constraints. 
In the future we wish to explore (i) how to relax stochastic query constraints to recommend alternative packages with better objectives that may slightly violate current constraints, (ii) how to scalably support sequential decision-making or two-stage stochastic programs, and (iii) how to explain SPQ solutions allowing for increased user trust with uncertain data.

\begin{acks}
This work was supported by the ASPIRE Award for Research Excellence (AARE-2020) grant
AARE20-307 and NYUAD CITIES, funded by Tamkeen under the Research Institute Award CG001, and by the National Science Foundation under grants 1943971 and 2211918.
\end{acks}

\clearpage

\bibliographystyle{ACM-Reference-Format}
\bibliography{sample}
\clearpage
\appendix

\ifextended
\else
\noindent\textbf{[THE FOLLOWING APPENDICES WILL APPEAR IN THE ONLINE EXTENDED VERSION ONLY]}\\
\fi

\section{Frequently Used Notations}
\label{appendix:notations}
\revc{For the reader's convenience, Table~\ref{tab:notations} displays a list of frequently used notations that we have used throughout the paper.}

\begin{table}[h]
\caption{\revc{Table of Notations}}
\label{tab:notations}
\begin{tabular}{@{}cc@{}}
\toprule

Notation & Definition\\
\midrule
$\text{CVaR}^\bot_\alpha(A)$ & Lower-tail $\alpha$-confidence CVaR \\
$\text{CVaR}^\top_\alpha(A)$ & Upper-tail $\alpha$-confidence CVaR \\
$\LCVaR_\alpha(x,A)$ & $\sum_{i=1}^n \text{CVaR}_\alpha(t_i.A)*x_i$, for tuples $\{t_1, ..., t_n\}$ \\
$\mathcal{O}$ & Set of Optimization Scenarios\\
$\mathcal{V}$ & Set of Validation Scenarios\\
$\widehat{\text{CVaR}}_\alpha(t_i.A,\mathcal{O})$ & SAA-estimate of $\text{CVaR}_{\alpha}(t_i.A)$ based on $\mathcal{O}$\\
$Q$ & Stochastic Package Query\\
$m$ & Initial Number of Optimization Scenarios\\
$\hat{m}$ & Number of Validation Scenarios\\
$\delta$ & Bisection Termination Threshold\\
$\epsilon$ & Approximation Error Bound\\
$T^R$ & A set of representative tuples\\
$\tau$ & Size Threshold\\
$d_A$ & Diameter Threshold for $A$\\
$\Gamma$ & Risk Tolerance\\
$\Delta \Gamma$ & Change in risk tolerance\\
$P_{max}$ & Maximum number of distinct tuples in a package\\
\bottomrule
\end{tabular}
\end{table}

\section{Proofs of Theorems}
\label{appendix:APSproof}

\begin{proof}[Proof of Theorem~\ref{th:constraintCompare}]
The first implication follows from \cite[Prop.~6.9]{McNeilFE15}. For the second implication, suppose that $\text{VaR}_\alpha(x\cdot A)< V$. Then the monotonicity of $\text{VaR}_\alpha$ implies that $\text{VaR}_u(x\cdot A)< V$ for $u\in [0,\alpha]$. Using the definition in \eqref{eq:defCVaRL}, we then have
\[
\text{CVaR}_\alpha(x\cdot A)=\frac{1}{\alpha}\int_0^\alpha\text{VaR}_u(x\cdot A)\,du
< \frac{1}{\alpha}\int_0^\alpha V\,du=V,
\]
which proves the contrapositive of the implication.
\end{proof}%

\begin{proof}[Proof of Theorem~\ref{th:alternating_parameter_search}]
Since $\mathcal{F}_{Q(S)}$ is nonempty, we know that $(\alpha^*,V^*)$ exists and so need to show that APS will find this parameterization. First suppose that $\alpha^*<1$ and let  $\tilde\alpha > \alpha^*$ be the least value of $\alpha'$ for which APS finds a parameterization $(\tilde\alpha,\tilde{V})$ yielding an infeasible package. We claim that $\tilde{V}\ge V^*$. To see this, first note that $(\tilde\alpha, V^*)$ must yield an infeasible package since $\tilde\alpha>\alpha^*$ implies that $(\tilde\alpha, V^*)$ yields a higher objective value than $(\alpha^*,V^*)$ and hence the package must be infeasible by definition of $(\alpha^*,V^*)$. However, given $\alpha'=\tilde\alpha$, APS will ensure that $\tilde{V}$ is set to the \emph{maximum} value of $V'$ that yields an infeasible package, due to the pattern of alternating decreases in $\alpha'$ and $V'$ during the search. This proves the claim. After finding $(\tilde\alpha, \tilde{V})$, APS next decreases $\alpha'$ to the maximum value that yields a feasible package---we claim this value must be $\alpha^*$. Indeed, $(\alpha^*, \tilde{V})$ yields a feasible package since $(\alpha^*, V^*)$ does and $\tilde{V} \ge V^*$ by the previous claim, and if $(\hat\alpha,\tilde{V})$ were to be found for some $\tilde\alpha\ge \hat\alpha> \alpha^*$ then APS would decrease $V'$ until a parameterization $(\hat\alpha,\hat{V})$ yielding an infeasible package is found, contradicting the definition of $\tilde\alpha$ as the \emph{smallest} value of $\alpha'$ exceeding $\alpha^*$ for which PS finds an infeasible solution. Next, APS improves the objective score of the feasible package produced for $(\alpha^*,\tilde{V})$ by reducing $V'$ as much as possible. Since $V'=V^*$ results in the most optimal feasible package, APS will find the parametrization $(\alpha^*, V^*)$.

Now suppose that $\alpha^*=1$. Since $V^* \leq V$ implies that $(1,V)$ is at least as restrictive as $(1,V^*)$, the package corresponding to $(1,V)$ is feasible. Thus the first step of APS, i.e., the $\alpha$-search, will immediately terminate and return the pair $(1,V)$. The $V$-search will then find the pair $(1,V^*)=(\alpha^*,V^*)$ and terminate. Thus APS will find $(\alpha^*,V^*)$ in all cases.
\end{proof}

\reva{\begin{proof}[Proof of Theorem~\ref{th:mad_cvar}]
For $i=1,2$ denote by $E_i$ the event $\{t_i.C \le q_{\alpha}(t_i.C)\}$. Observe that 
\begin{align*}
\Pr(E_1\setminus E_2)&=\Pr(E_1)-\Pr(E_1 \cap E_2)\\
&=\alpha-\Pr(E_1 \cap E_2)\\
&=\Pr(E_2)-\Pr(E_1 \cap E_2)\\
&=\Pr(E_2 \setminus E_1).
\end{align*}
Hence, we define $c=\Pr(E_1\setminus E_2)=\Pr(E_2 \setminus E_1) > 0$. Next observe that
\begin{align*}
    &\mathbb{E}[t_2.C|E_1] - \mathbb{E}[t_2.C|E_2]\\ 
    &\qquad=\frac{\mathbb{E}[t_2.C\times\mathbf{1}(E_1)]}{\Pr[E_1]}-\frac{\mathbb{E}[t_2.C\times\mathbf{1}(E_2)]}{\Pr[E_2]}\\
    &\qquad=\frac{1}{\alpha}(\mathbb{E}[t_2.C\times(\mathbf{1}(E_1)-\mathbf{1}(E_2))])\\
    &\qquad=\frac{1}{\alpha}(\mathbb{E}[t_2.C\times(\mathbf{1}(E_1\setminus E_2)-\mathbf{1}(E_2 \setminus E_1))])\\
    &\qquad=\frac{1}{\alpha}(\mathbb{E}[t_2.C\times\mathbf{1}(E_1\setminus E_2)-t_2.C \times \mathbf{1}(E_2 \setminus E_1)])\\
    &\qquad=\frac{c}{\alpha}\left(\frac{\mathbb{E}[t_2.C\times\mathbf{1}(E_1\setminus E_2)]}{\Pr(E_1\setminus E_2)}-\frac{\mathbb{E}[t_2.C \times \mathbf{1}(E_2 \setminus E_1)]}{\Pr(E_2 \setminus E_1)}\right)\\
    &\qquad=\frac{c}{\alpha}\left(\mathbb{E}[t_2.C|E_1\setminus E_2]-\mathbb{E}[t_2.C|E_2\setminus E_1]\right) \geq 0,
\end{align*}
where the last inequality holds because 
the first expectation integrates over $t_2.C$ values that exceed $q_{\alpha}(t_2.C)$ whereas the second expectation integrates over values equal to at most $q_{\alpha}(t_2.C)$. Thus $\mathbb{E}[t_2.C|E_1] \geq \mathbb{E}[t_2.C|E_2]=\text{CVaR}_{\alpha}(t_2.C)$ and it follows that
Then we have 
\begin{align*}
  \text{CVaR}_{\alpha}(t_1.C) & = \mathbb{E}[t_1.C \bigm\vert E_1]\\
  &= \mathbb{E}[t_2.C \bigm\vert E_1] - \mathbb{E}[t_2.C - t_1.C \bigm\vert E_1]\\
  &\ge \mathbb{E}[t_2.C \bigm\vert E_1]- \mathbb{E}[|t_2.C - t_1.C| \bigm\vert E_1]\\
  &\ge \text{CVaR}_{\alpha}(t_2.C) - \mathbb{E}[|t_2.C - t_1.C| \bigm\vert E_1]\\
  &\ge \text{CVaR}_{\alpha}(t_2.C) - \frac{\mathbb{E}[|t_2.C - t_1.C|]}{\alpha}\\
  &\ge \text{CVaR}_{\alpha}(t_2.C) - \frac{d_C}{\alpha}.
\end{align*}

Hence
\[
\text{CVaR}_{\alpha}(t_2.C) - \text{CVaR}_{\alpha}(t_1.C) \le \frac{d_C}{\alpha}.
\]

Similarly, we can show that
\[
\text{CVaR}_{\alpha}(t_1.C) - \text{CVaR}_{\alpha}(t_2.C) \le \frac{d_C}{\alpha},
\]
so that
\[
|\text{CVaR}_{\alpha}(t_1.C) - \text{CVaR}_{\alpha}(t_2.C)| \le \frac{d_C}{\alpha}.
\]

\end{proof}}

\begin{proof}[Proof of Theorem~\ref{th:opt_sskr}]
Write $S=\{t_1,\ldots,t_n\}$ and $S_0=\{r_1,\ldots,r_d\}$. Consider first the sketch phase and observe that, since duplicates of a representative have identical marginal distributions, we have that, for any $\lambda\in\Lambda$, 
\[
\begin{split}
    \mathbb{E}[t_i.O] - \mathbb{E}[\lambda(t_i).O] &= \mathbb{E}[t_i.O - \lambda(t_i).O]\\
    &\le \mathbb{E}[|t_i.O - \lambda(t_i).O|] \le d_o.
\end{split}
\]
Hence $\mathbb{E}[\lambda(t_i).O] \ge \mathbb{E}[t_i.O] - d_o$. By assumption, \sskr returns a non-NULL package $x^*$, and the corresponding objective value for $y=\lambda(x^*)$ in $Q(S_0)$ is
\[
\begin{split}
    \omega^*_s &= \sum_{j=1}^d \mathbb{E}[r_j.O]y_j \\
    &= \sum_{i=1}^{n} \mathbb{E}[\lambda(t_i).O]x^*_i \ge \sum_{i=1}^{n} (\mathbb{E}[t_i.O]-d_o)x^*_i\\
    &= \sum_{i=1}^{n} \mathbb{E}[t_i.O]x^*_i - d_o \sum_{i=1}^{n} x^*_i \ge \omega^* - d_o P_{max}.
\end{split}
\]
Since $x^*$ is assumed to be sketch feasible, we can choose $\lambda$ so that $y=\lambda(x^*)$ is a feasible solution of $Q(S_0)$. Then the objective value $\omega^*_{sk}$ for the optimal solution $y^*$ to $Q(S_0)$ satisfies
$\omega^*_{sk} \ge \omega^*_s\ge\omega^* - d_o P_{max}$. Note that $\omega^*_{sk}  \ge \omega^*_{s}$ may hold even if $x^*$ is not sketch feasible, in which case the approximation guarantee will still hold. As discussed before the statement of the theorem, the objective value $\Bar{\omega}$ for the solution to $Q(S_0)$ returned by \textsc{SolveSketch} satisfies $\Bar{\omega} \ge (1-\epsilon)\omega^*_{sk}\ge (1-\epsilon)(\omega^* - d_o P_{max})$. Moreover, the objective value $\omega$ corresponding to the solution $x^*$ satisfies $\omega \ge (1-\epsilon) \Bar{\omega}$, so that
%
\begin{align*}
\omega \ge (1-\epsilon) \Bar{\omega} \ge (1-\epsilon)^2(\omega^* - d_o P_{max})    
\end{align*}
as desired.
\end{proof}

\section{Accelerated NORTA Procedure}
\label{appendix:NORTA}

\begin{algorithm}[t]
\caption{\textsc{GenerateScenario}}\label{alg:scenario_construction}
\begin{algorithmic}[1]
\Require $t :=$ A representative tuple
\Statex $A :=$ A stochastic attribute
\Statex $F_{(t)} :=$ The cumulative distribution function of $t.A$
\Statex $d :=$ The number of duplicates
\Statex $\overline{\rho} :=$ Correlation coefficient between duplicates
\State $F_{(t)} \gets$ Estimated CDF of $t.A$
\State $\overline{\kappa}=\textsc{NORTAfit}(\overline{\rho},F_{(t)})$ \Comment{compute $\Sigma_Z$ off-diagonal entry value}\label{li:NORTAfit}
\State $s_1, \dots, s_d \gets$ i.i.d. samples from $N(0, 1)$
\State $s'_1 \gets s_1 \sqrt{1 + (d-1) \overline{\kappa}}  - \sum_{i=2}^{d} s_i \sqrt{1 - \overline{\kappa}}$
\For {$i \in \{2, \dots, m\}$}
    \State $s'_i \gets s_i \sqrt{1 - \overline{\kappa}} - s_1 \sqrt{1 + (d-1) \overline{\kappa}}$
\EndFor
\For {$i \in \{1, \dots, d\}$}
    $t^{(i)}.A = F_{(t)}^{-1}(\Phi(s'_i))$
\EndFor
\State \Return $[t^{(1)}.A,\ldots,t^{(d)}.A]$

\end{algorithmic}
\end{algorithm}

Let $t$ be a representative, $A$ a stochastic attribute of interest, and $\{t^{(1)},\ldots,t^{(d)}\}$ a set of $d$ duplicates of $t$. To generate a scenario for the random variables  $t^{(1)}.A,\ldots,t^{(d)}.A$, which are mutually correlated with common pairwise correlation coefficient $\overline{\rho}$, we propose an accelerated version of the NORTA (NORmal-To-Anything) copula method~\cite{ghosh2003behavior}. In general, to generate samples of a $d$-dimensional random vector $Y=(Y_1,\ldots,Y_d)$ having specified marginal cumulative distribution functions (CDF's) $F_1,\ldots,F_d$ and a specified covariance matrix $\Sigma_Y$, the NORTA method first generates a $d$-dimensional multivariate normal random variable $Z$ with mean vector $(0,\ldots,0)$ and covariance matrix $\Sigma_Z$, and then generates a sample $Y$ by setting $Y_i=F^{-1}_i\bigl(\Phi(Z_i)\bigr)$ for $i=1,\ldots,d$, where $F^{-1}_i(u)=\inf\{y: F_i(y)\ge u\}$; standard results show that each $Y_i$ has CDF $F_i$. The covariance matrix $\Sigma_Z$ is selected (via offline solution of a semi-definite program) so that the resulting covariance matrix of $Y$ is either very close to, or exactly equal to, the target covariance matrix $\Sigma_Y$. To generate the $Z$ vector, we first generate a $d$-dimensional vector $Z'$ of mutually independent $N(0,1)$ random variables, and then set $Z=U\sqrt{\Lambda} Z'$, where $\Lambda$ is a diagonal matrix whose entries are the eigenvalues of $\Sigma_Z$ and $U$ is an orthogonal matrix whose columns are eigenvectors of $\Sigma_Z$; see \cite{gentle2009computational}. In general, computing $\Lambda$ and $U$ has a computational complexity of $\mathcal{O}(d^3)$. This makes constructing scenarios expensive as we need to perform this decomposition for every representative. Furthermore, the algorithm may iteratively increase the number of duplicates to get better packages, requiring the computation to be repeated at each iteration. Fortunately, $\Sigma_Y$ has a special structure in our case, namely, $\Sigma_Y(i,j)$ equals 1 if $i=j$ and equals $\overline{\rho}$ otherwise. We can exploit this structure to significantly speed up the NORTA calculations.




In more detail, for any pair $(i,j)$ with $i\not=j$, the general NORTA method computes $\Sigma_Z(i,j)$ as a solution $\kappa_{i,j}$ to the equation
\begin{equation}\label{eq:rootFind}
\begin{split}
\Sigma_Y(i,j)&=
\int_{-\infty}^{\infty} \int_{-\infty}^{\infty} F_i^{-1}\left(\Phi\left(z_i\right)\right) F_j^{-1}\left(\Phi\left(z_j\right)\right)\\
 &\qquad\cdot \varphi_{i j}\left(z_i, z_j,\kappa_{i,j}\right) \,dz_i \,dz_j-\mu_i \mu_j,
\end{split}
\end{equation}
where $\mu_i$ and $\mu_j$ are the means of $F_i$ and $F_j$ and
\[
\begin{split}
&\varphi_{i j}(z_i,z_j,\kappa_{i,j})\\
&\quad=\frac{1}{2 \pi \sqrt{1-\kappa_{i,j}^2}} \exp \left\{-\frac{1}{2\left(1-\kappa_{i,j}^2\right)}\left[z_i^2-2 \kappa_{i,j} z_i z_j+z_j^2\right]\right\}
\end{split}
\]
is the standard bivariate normal probability density function. In our setting, $F_i\equiv F_{(t)}$ for all duplicates, where $F_{(t)}$ is the CDF of the corresponding random variable $t.A$, and $\Sigma_Y(i,j)\equiv\bar\rho$ for all pairs $(i,j)$ with $i\not= j$. Thus the root-finding problem in \eqref{eq:rootFind} is identical for all $(i,j)$ pairs, so that we can take $\kappa_{i,j}\equiv\bar\kappa$ for an appropriate value of $\bar\kappa\ge 0$. This gives us the matrix $\Sigma_Z$ after setting $\Sigma_Z(i,i)=1$ for $i\in[1..d]$. A closed form of the eigenvalues and eigenvectors of $\Sigma_Z$ can then be trivially computed as $\lambda_1 = 1+(d-1)\overline{\kappa}$ with the corresponding eigenvector $[1, 1, \dots, 1]$ and $\lambda_2 = 1-\overline{\kappa}$ with $d-1$ corresponding eigenvectors $[-1, 1, 0, \dots, 0], [-1, 0, 1, \dots, 0],\dots,[-1, 0, \dots, 0, 1]$.

Thus, after generating a vector $s=[s_1, \dots, s_d]$ of i.i.d.\ samples from the standard normal distributions, a sample vector $s'$ of $d$ correlated multivariate normal distributions can be computed in $\mathcal{O}(d)$ time as $s' = [s_1\sqrt{\lambda_1} - \sqrt{\lambda_2}\sum_{i=2}^{d}s_i, s_2\sqrt{\lambda_2} - s_1\sqrt{\lambda_1}, \dots, s_d\sqrt{\lambda_2} - s_1\sqrt{\lambda_1}]$. Finally, sample values $t^{(1)}.A,\ldots,t^{(d)}.A$ can be generated as $t^{(i)}.A=F_{(t)}^{-1}\bigl(\Phi(s'_i)\bigr)$ for $i\in[1..d]$.
The CDF $F_{(t)}$ can be pre-computed offline using a histogram-based density estimation scheme as in~\cite{chan2014near}. Algorithm~\ref{alg:scenario_construction} describes our linear time NORTA-based scenario construction scheme. The function \textsc{NORTAfit} in line~\ref{li:NORTAfit} solves the root-finding problem in \eqref{eq:rootFind}.

\section{Partitioning Details}
\label{sec:DPdetails}

\subsection{MAD Case \revision{Studies}} \label{appendix:mad_case_study}

To gain more insight into the behavior of the MAD distance measure, we consider two different scenarios.

\smallskip
\noindent{\textbf{Correlated normal random variables.}}
We first consider the case of two correlated normal random variables $X$ and $Y$ having respective means $\mu_X$ and $\mu_Y$, respective variances $\sigma^2_X$ and $\sigma^2_Y$, and covariance $c_{XY}$. Let $\Delta=\mathbb{E}[X-Y]=\mu_x-\mu_y$ and $\sigma^2=\text{Var}[X-Y]=\sigma^2_X-2c_{XY}+\sigma^2_T$. Using standard results for the ``folded normal'' distribution, we have that
\begin{equation}\label{eq:MADnorm}
\textsc{MAD}(X,Y)=\sigma\sqrt{2/\pi}e^{-\Delta^2/2\sigma^2}+\Delta \bigl(1-2\Phi(-\Delta/\sigma)\bigr),
\end{equation}
where $\Phi$ is the standard normal CDF. If  $\sigma^2$ is very small relative to $\Delta$, then $\textsc{MAD}\approx \Delta$, and thus the MAD distance primarily reflects the distance between the distribution means. If $\sigma^2$ is much larger than $\Delta$, then $\textsc{MAD}\approx \sigma\sqrt{2/\pi}$ and thus mostly depends on the variance of $X-Y$. In this latter case, $\sigma^2$ becomes smaller as nonnegative correlation between $X$ and $Y$ increases, so that the MAD distance primarily reflects the correlation between $X$ and $Y$. The general formula in \eqref{eq:MADnorm} balances inter-mean distance and correlation.

\begin{figure}
    \centering
    \includegraphics[width=1\linewidth]{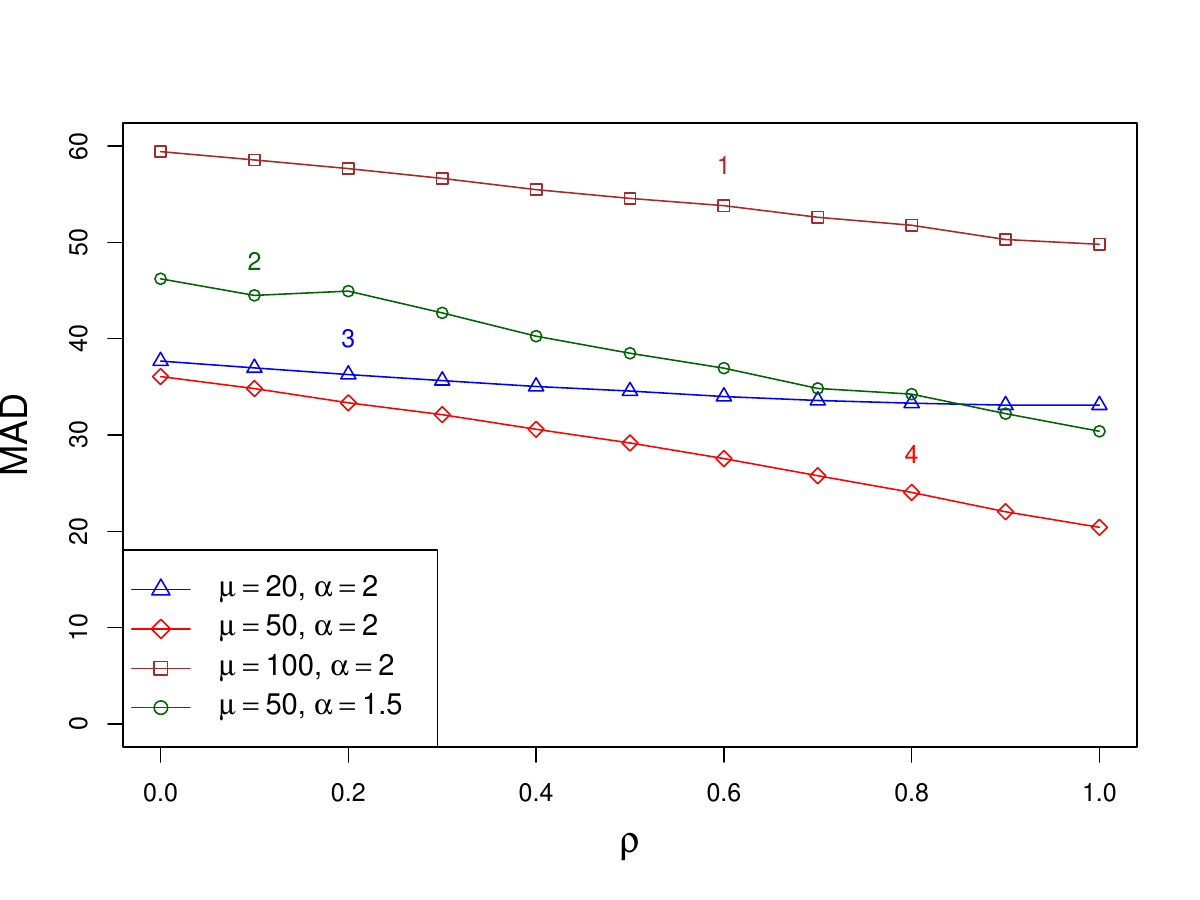}
    \caption{\reva{MAD for correlated Pareto$(a,\alpha)$ and uniform$[0,100]$ random variables for different values of NORTA correlation coefficient $\rho$, Pareto mean $\mu=\bigl(\alpha/(\alpha-1)\bigr)a$, and Pareto tail coefficient $\alpha$.}}
    \label{fig:ParetoUniform}
\end{figure}

\smallskip
\noindent\reva{\textbf{Correlated Pareto and uniform random variables.}
We next consider a scenario where the two correlated random variables $(X,Y)$ have widely different marginal distributions, one of which also has a heavy tail, and show that the MAD distance behaves in the desired manner.

Let the marginal distribution of $X$ be a Pareto distribution with CDF $F_X(x)=\bigl(1-(x/a)^{-\alpha}\bigr)I[x\ge a]$, where $a\in(0,\infty)$, $\alpha\in(1,2]$, and $I[\cdot]$ is an indicator function. For $\alpha$ in the given range, $\mathbb{E}[X]=\bigl(\alpha/(\alpha-1)\bigr)a<\infty$ but $\text{Var}[X]=\infty$. Let the marginal distribution of $Y$ be a uniform distribution on $[0,100]$, i.e., $F_Y(x)=\min(x/100,1)I[x\ge 0]$, so that $\mathbb{E}[Y]=50$ and $Y$ has finite moments of all orders. Suppose that $X$ and $Y$ are correlated using the NORTA method; that is, to generate a joint sample of $X$ and $Y$, first generate $(Z_1,Z_2)$ according to a bivariate normal distribution with covariance matrix $\Sigma=\big(\begin{smallmatrix} 1 & \rho\\ \rho & 1 \end{smallmatrix}\big)$, then set $(U,V)=\bigl(\Phi(Z_1),\Phi(Z_2)\bigr)$ where $\Phi$ is the CDF of a $N(0,1)$ random variable, and finally set $(X,Y)=\bigl(F_X^{-1}(U),F_Y^{-1}(V)\bigr)$. Thus the degree of correlation between $X$ and $Y$ is governed by the correlation $\rho$ of the bivariate normal distribution: if $\rho=0$, then $X$ and $Y$ are mutually independent and if $\rho=1$, then $X$ and $Y$ are perfectly positively correlated in that $Y=g(X)$ for some increasing deterministic function $g$.

Figure~\ref{fig:ParetoUniform} shows the MAD between $X$ and $Y$ for various values of $\rho$, $\alpha$, and $\mu=\mathbb{E}[X]$. As can be seen, for all values of $\mu$ and $\alpha$, the MAD decreases as $\rho$ increases, that is, as the correlation between $X$ and $Y$ increases. Comparing curves~1, 3, and 4, we see that, for a given value of $\rho\in[0,1]$ and given value $\alpha=2$, the MAD decreases as $|\mathbb{E}[X]-\mathbb{E}[Y]|$ decreases. Finally, comparing curves 2 and 4, we see that, for fixed $\rho\in[0,1]$ and $\mathbb{E}[X]=\mathbb{E}[Y]=50$, the MAD increases as the tail coefficient,  and hence the variability, of the Pareto distribution increases relative to the variability of the marginal distribution of $Y$.

\begin{figure}
    \centering
    \includegraphics[width=0.8\linewidth]{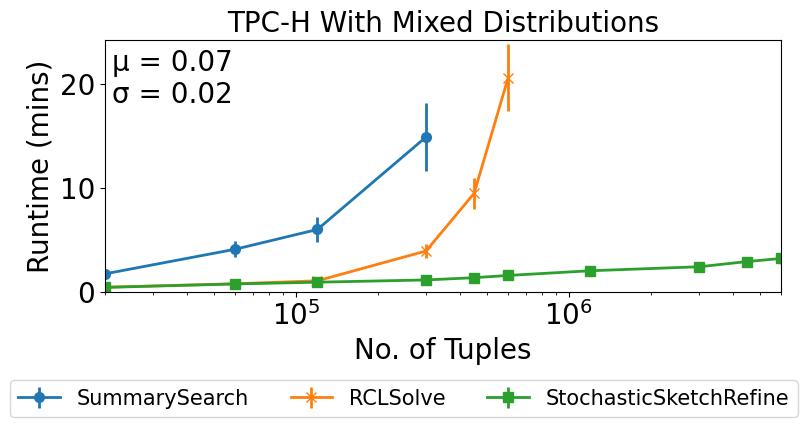}
    \caption{\reva{Consistent with the notation in Figure~\ref{fig:inc_tuples}, ($\mu, \sigma$) shows the relative integrality gap statistics for \sskr's packages. In a setup where underlying distributions for different tuples are heterogeneous for the same attribute, MAD-created partitions continue to have tuples similar enough to create packages of good quality, and balance the number of partitions with the number of tuples-per-partition to allow \sskr to outperform \lcsolve and \ssearch. Absence of data points indicates no solutions were generated in 1.5 hours.}}
    \label{fig:mixed_dis_runtime}
\end{figure}

\noindent\textbf{Performance of \sskr with a heterogeneous stochastic attribute.}
To further examine how effectively MAD identifies similar tuples within heterogeneous and heavy-tailed distributions in the context of \sskr, we repeated the experiment in Figure~\ref{fig:inc_tuples} for a single query with a modified version of the stochasticized Lineitem relation of the TPC-H dataset. As before, samples of the stochastic attribute `Price' are obtained for a given tuple by adding to the original deterministic TPC-H Price value a noise term sampled from a tuple-specific probability distribution. For half of the tuples, this distribution was a Pareto distribution with a tuple-specific mean and $\alpha$ sampled from $U(1, 5)$ and $U(1, 2.5)$; for the remaining tuples, the distribution was a uniform distribution $U(l, h)$, where tuple-specific values of $h$ and $l$ were randomly sampled from $[-10, 10]$ and $[-10, h]$. The stochastic attribute `Quantity' was specified in the same way.
We kept all hyperparameters settings the same as those described in Section~\ref{sec:experimental_evaluation}. For a sample query, Figure~\ref{fig:mixed_dis_runtime} shows that the partitions generated with MAD were both small and similar enough to create packages from large datasets within a reasonable amount of time and with small integrality gaps despite the heterogeneity in the underlying distributions of the tuples and the heavy-tailed nature of the Pareto distribution.
}

\subsection{MAD Estimation} \label{appendix:mad_estimation}

\begin{figure}
    \centering
    \includegraphics[width=0.8\linewidth]{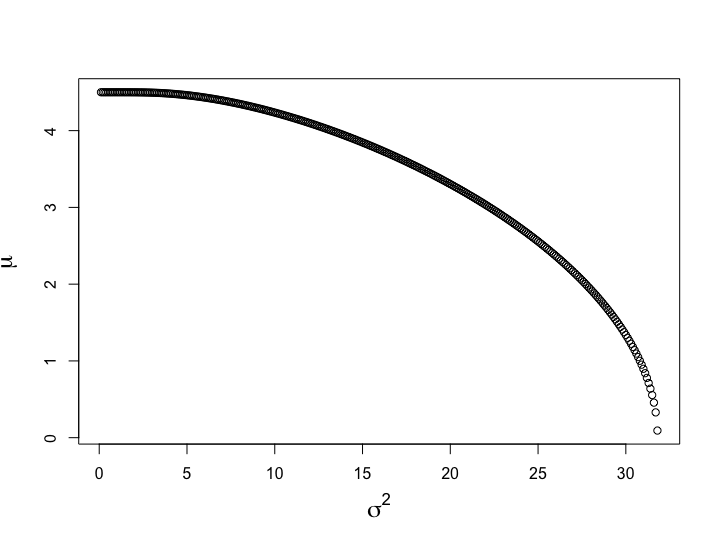}
    \caption{Trade-off between values of $\mu$ and $\sigma^2$ when maintaining a fixed value of $\mu_D=d+\epsilon$ ($d=4$ and $\epsilon=0.5$).}
    \label{fig:muSig2Tradeoff}
\end{figure}

\begin{figure}
    \centering
    \includegraphics[width=0.75\linewidth]{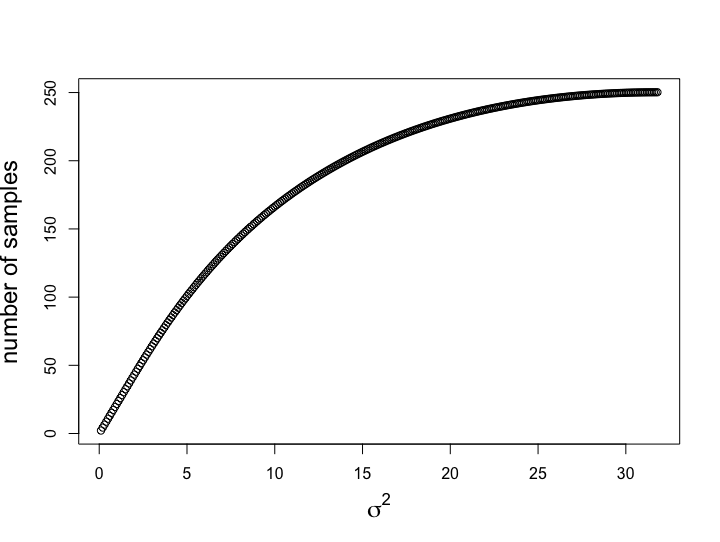}
    \caption{Minimum required sample size as a function of $\sigma^2$ ($d=4$, $\epsilon=0.5$, $p=0.01$). For each value of $\sigma^2$, we compute a corresponding value of $\mu$ such that $\mu_D(\mu,\sigma^2)=d+\epsilon$.}
    \label{fig:nsampCurve}
\end{figure}

To estimate $\textsc{MAD}(X,Y)$ for two (possibly dependent) attributes $X$ and $Y$ using a sample size of $n>1$, we take i.i.d.\ samples $(X_1,Y_1),\ldots,(X_n,Y_n)$, compute the quantities $D_i=|X_i-Y_i|$ for $i\in[1..n]$, and then estimate the true value $m$ of $\textsc{MAD}(X,Y)$ as $\hat m_n=n^{-1}\sum_{i=1}^n D_i$. To help figure out roughly how large $n$ needs to be, we reason as follows. When estimating $m$ in the context of \dpart, the most important goal is to avoid erroneously including distant points in a given partition. We can formalize this goal as ensuring that $\mathbb{P}(\hat m_n\le d)\le p$ when $m=d+\epsilon$, where $d$ is the target diameter of the partition, $p$ is a specified small probability, and $\epsilon>0$ represents an ``indifference zone'' within which we are willing to accept a misclassification error. For tractability, suppose that $X$ and $Y$ have normal distributions with means $\mu_X$ and $\mu_Y$, variances $\sigma^2_X$ and $\sigma^2_Y$, and covariance $\sigma_{XY}$. Without loss of generality, suppose that $\mu_X\ge\mu_Y$. Then $D_i$ has mean
\[
\mu_D=\sqrt{2\sigma^2/\pi}e^{-\mu^2/2\sigma^2}+\mu \bigl(1-2\Phi(-\mu/\sigma)\bigr)
\]
and variance $\sigma_D^2=\mu^2+\sigma^2-\mu_D^2$, where $\mu=\mu_X-\mu_Y$ and $\sigma^2=\sigma_X^2+\sigma_Y^2-2\sigma_{XY}$. The central limit theorem now implies that $M_n\stackrel{D}{\sim}N(\mu_D,\sigma_D^2/n)$ to a good approximation as $n$ becomes large. Thus
\[
\mathbb{P}(M_n\le d)\approx\Phi\left(\frac{d-\mu_D}{\sigma_D/\sqrt{n}}\right),
\]
where $\Phi$ is the CDF of a standard $N(0,1)$ normal random variable. We are assuming that $\mu$ and $\sigma^2$ are such that $m=\mu_D=d+\epsilon$, so that
\[
n\ge \left(\frac{\sigma_D z_{1-p}}{\epsilon}\right)^2,
\]
where
\[
\sigma_D^2=\mu^2+\sigma^2-(d+\epsilon)^2
\]
and $z_q=\Phi^{-1}(q)$. When maintaining a fixed value of $\mu_D=\mu_D(\mu,\sigma^2)$, parameters $\mu$ and $\sigma^2$ vary inversely; see Figure~\ref{fig:muSig2Tradeoff}. Possible values of $(\mu,\sigma^2)$ range from $(0,\bar\sigma^2)$ to $(d+\epsilon,0)$, where $\bar\sigma=(d+\epsilon)\sqrt{\pi/2}$. To compute the most conservative requirement on the number of samples for given values of $d$, $\epsilon$, and $p$, we want to choose $\mu$ and $\sigma^2$ to maximize $\sigma_D^2=\sigma_D^2(\mu,\sigma^2)$ subject to the constraint $\mu_D(\mu,\sigma^2)=d+\epsilon$. As can be seen from Figure~\ref{fig:nsampCurve}, the maximum value of $n$ occurs when $\sigma^2=\bar\sigma^2$ and $\mu=0$. In this case the number of required samples is
\[
n\approx\frac{(0.5\pi-1)(d+\epsilon)^2z_{1-p}^2}{\epsilon^2}.
\]
For example, during our experiments, we set the diameter threshold for the attribute gain in the portfolio dataset to $d=100$. Considering $\epsilon=10$, and $p=0.05$, we have $n\approx 188$. 

\subsection{\dpart Details} \label{appendix:dpart_details}
The pseudocode for the \textsc{\dpart} routine is given as Algorithm~\ref{alg:dpart}.

\begin{algorithm}[p]
\caption{\textsc{\dpart}}\label{alg:dpart}
\begin{algorithmic}[1]
\Require $R :=$ A set of stochastic tuples
\Statex $\tau :=$ Size Threshold
\Statex $\hat{d}=[d_{A_1}, d_{A_2}, \dots, d_{A_d}] :=$ Diameter Threshold for each attribute

\Ensure $P :=$ Partitioning of $R$ satisfying size and diameter constraints  

\State $pivot :=$ \textsf{PickAtRandom($R$)}
\State $r_{max}:= 0$ \Comment{Highest diameter-to-threshold ratio}
\State $A_{max}:= $ None \Comment{Attribute with highest ratio}
\For{$i=\{1, \dots, d\}$}
\State \textsf{\_}, $d^{R}_{A_i} \gets$ \textsf{PivotScan($pivot, A_i$)} \Comment{Distance of the farthest tuple from $pivot$ with respect to $A_i$}
\If{$\frac{2*d^R_{A_i}}{d_{A_i}} \ge r_{max}$}
\State $r_{max} \gets \frac{2*d^{R}_{A_i}}{d_{A_i}}$
\State $A_{max} \gets A_i$ \Comment{Update attribute with highest ratio}
\EndIf
\EndFor
\If{$|R| \le \tau$ and $r_{max} \le 1$}
\State \Return $R$ \Comment{Both size and diameter constraints are satisfied}
\Else
\State \textsf{ids\_with\_distances}, \textsf{\_} $\gets$ \textsf{PivotScan($pivot, A_{max}$)} \Comment{Get tuples identifiers with increasing distances from $pivot$}
\State $pivot \gets$ \textsf{LastElement}(\textsf{ids\_with\_distances}) \Comment{Repivot to the farthest tuple}
\State \textsf{ids\_with\_distances}, \textsf{\_} $\gets$ \textsf{PivotScan($pivot, A_{max}$)} \Comment{Get tuples identifiers with increasing distances from $pivot$}
\State $P \gets \phi$
\State \textsf{CSet} $\gets \phi$
\If{$|R|>\tau$} \Comment{Size-based Partitioning}
\For{$id, \_ \in$ \textsf{ids\_with\_distances}}
\If{|\textsf{CSet}|$=\tau$} \Comment{\textsf{CSet} is already full}
\State $P \gets P$ $\cup$ \dpart(\textsf{CSet}, $\tau$, $\hat{d}$) \Comment{Recursively repartition current set of tuples}
\State \textsf{CSet} $\gets \phi$
\EndIf
\State \textsf{CSet} $\gets$ \textsf{CSet} $\cup$ $id$
\EndFor
\If{|\textsf{CSet}| $> 0$}
\State $P \gets P$ $\cup$ \dpart(\textsf{CSet}, $\tau$, $\hat{d}$) \Comment{Repartition remaining tuples}
\EndIf
\Else \Comment{Distance-based Partitioning}
\State \textsf{multiplier} $\gets 1$
\For{\textsf{id}, \textsf{distance} $\in$ \textsf{ids\_with\_distances}}
\If{\textsf{distance} $\ge$ \textsf{multiplier}$*d_{A_{max}}$}
\State $P \gets P$ $\cup$ \dpart(\textsf{CSet}, $\tau$, $\hat{d}$) \Comment{Recursively repartition current set of tuples}
\State \textsf{CSet} $\gets \phi$
\State \textsf{multiplier} $\gets$ \textsf{multiplier} $+1$
\EndIf
\State \textsf{CSet} $\gets$ \textsf{CSet} $\cup$ $id$
\EndFor
\If{|\textsf{CSet}| $> 0$}
\State $P \gets P$ $\cup$ \dpart(\textsf{CSet}, $\tau$, $\hat{d}$) \Comment{Repartition remaining tuples}
\EndIf
\EndIf
\State \Return $P$
\EndIf
\end{algorithmic}
\end{algorithm}

\subsection{Representative Selection Details} \label{appendix:rep_details}
The pseudocode for the \textsc{RepresentativeSelection} routine is given as Algorithm~\ref{alg:select}.

\begin{algorithm}[t]
\caption{\textsc{RepresentativeSelection}}\label{alg:select}
\begin{algorithmic}[1]
\Require $A :=$ the stochastic attribute
\Statex $t_1,\dots,t_N :=$ N tuples
\Statex $S :=$ the set of precomputed scenarios 

\For {$j \in \{1,\dots,n\}$ \textbf{and} $i \in \{1,\dots,N$\}}
    \State $S_{ij}.A \gets $ the realized value of $t_i.A$ in scenario $j$\label{ln:tAval}
\EndFor

\For {$j \in \{1,\dots,n\}$} \label{ln:forloopMinMax}
    \State $m_j.A, M_j.A \gets \min_{1\le i \le N} S_{ij}, \max_{1\le i \le N} S_{ij}$\label{ln:MaxA}
\EndFor

\For {$i \in \{1,\dots,n\}$}
    \State $C_{i}.A \gets \sum_{j=1}^{n} \max(M_j.A - S_{ij}.A, S_{ij}.A- m_j.A)$ \label{ln:tupleCost}
\EndFor

\State $i^* \gets \operatorname*{argmin}_{i} C_i.A$ \label{ln:rep}

\State \Return $t_{i^*}$

\end{algorithmic}
\end{algorithm}

\subsection{Creating Representatives for Correlated Attributes} \label{appendix:correlated_attributes}

The random variables for a set of correlated attributes $\mathbf{A}$ of a representative are made equal to the corresponding variables for the same attributes of any one of the tuples in the partition. We compute the worst-case replacement cost, as described in Algorithm~\ref{alg:select}, for every tuple in the partition for every attribute $A \in \mathbf{A}$, and map the representative's variables for every attribute in $\mathbf{A}$ to those of $t_r$, the tuple in the partition with the least total cost.

While generating scenarios for $d$ duplicates of a representative with $|\mathbf{A}|$ correlated attributes, we aim to generate scenarios so that both inter-tuple and inter-attribute correlations are satisfied. Essentially, we need to generate a total of $d*|\mathbf{A}|$ values for each scenario --- one for each random variable that corresponds to a correlated attribute of a duplicate. The correlation between any two random variables corresponding to different duplicates of the same attribute $A_i \in \mathbf{A}$ should be equal to the median correlation of every tuple in the partition with $t_r$ w.r.t. $A_i$. The correlation between any two random variables that correspond to the same duplicate but different attributes $A_i, A_j \in \mathbf{A}$ is specified by the correlation between $A_i$ and $A_j$. Given these requisite correlations and the histogram estimations of the CDF of each random variable, we can use the NORTA copula method described in Section~\ref{appendix:NORTA} to generate scenarios from the representative.    

\section{Approximation Guarantees} \label{sec:theory}


For any SPQ $Q(S)$, the size and diameter constraints used by \dpart ensure that the objective value $\Bar{\omega}$ of the returned sketch solution is ``close'' to the true objective value $\omega^*$ for $Q(S)$. \lcsolve-R then ensures that the returned objective value at each step of the refine phase is ``close'' to $\Bar{\omega}$. We now formalize this intuition. An execution of \sskr over $Q(S)$ is \emph{successful} if (i)~\textsc{SolveSketch} returns a non-NULL result and (ii)~a refine order is found such that each call to \lcsolve-R returns a non-NULL result; otherwise, \sskr fails and returns NULL. (We did not observe any \sskr failures in our experiments.) In the case of success, we show that for maximization problems, the sketch package produced by \textsc{SolveSketch} will have objective value $\Bar{\omega}\geq (1-\epsilon)\omega^*_{sk}$, where $\omega^*_{sk}$ is the objective value of the optimal sketch solution to $Q(S_0)$. Moreover, as discussed in Section~\ref{subsec:refine}, \textsc{Refine} produces a feasible package with objective value $\omega\geq (1-\epsilon)\Bar{\omega}$. Theorem~\ref{th:opt_sskr} combines these observations to provide an approximation guarantee on the objective value of a non-NULL package $x^*$ produced by \sskr.


We first define ``sketch feasibility''. For a sketch $S_0=\{r_1,\ldots,r_d\}$ of a set $S=\{t_1,\ldots,t_n\}$, denote by $\Lambda$ the set of all functions $\lambda:S\mapsto S_0$ such that, for each $i\in[1..n]$, the element $r_j=\lambda(t_i)$ is one of the duplicate representatives in $t_i$'s partition. For a package $x\in\mathbb{Z}_0^n$ and mapping $\lambda\in\Lambda$ define the package $\lambda(x)=(y_1,\ldots,y_d)\in \mathbb{Z}_0^d$ by setting $y_j=\sum_{i\in I_j}x_i$ for $j\in[1..d]$, where $I_j=\{i: \lambda(t_i)=r_j\}$. A package $x$ for $Q(S)$ is said to be \emph{sketch feasible} if there exists $\lambda\in\Lambda$ such that $y=\lambda(x)$ is a feasible package for $Q(S_0)$.


\begin{theorem}[Approximation Guarantee]\label{th:opt_sskr}
Suppose that \sskr returns a non-NULL solution package $x^*$ for a query $Q(S)$. If $x^*$ is sketch feasible, then 
the objective value $\omega$ corresponding to $x^*$ satisfies
\begin{equation}\label{ex:opt_lb}
\omega\ge (1-\epsilon)^2(\omega^* - d_o P_\text{max}),
\end{equation}
where $\omega^*$ is the objective value  of the optimal package solution to $Q(S)$, the quantity $P_\text{max}$ is the maximum allowable package size, and $d_o$ is the diameter threshold of the objective attribute $O$.
\end{theorem}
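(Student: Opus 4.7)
The plan is to chain two $(1-\epsilon)$-factors coming from the sketch and refine phases of \sskr with an additive $d_o P_{\max}$ gap that results from replacing original tuples by their representatives. The key lemma to establish first is a \emph{transfer lemma} on objective values: for every package $x \in \mathbb{Z}_0^n$ and every $\lambda \in \Lambda$, the induced sketch package $y = \lambda(x)$ satisfies
\begin{equation*}
\sum_{j=1}^{d} \mathbb{E}[r_j.O]\, y_j \;=\; \sum_{i=1}^{n} \mathbb{E}[\lambda(t_i).O]\, x_i \;\ge\; \sum_{i=1}^{n} \mathbb{E}[t_i.O]\, x_i \;-\; d_o \sum_{i=1}^{n} x_i.
\end{equation*}
The equality is just a regrouping of the sum by partition; the inequality uses the MAD bound $\mathbb{E}[|t.O - \lambda(t).O|] \le d_o$ guaranteed by \dpart together with the elementary fact $|\mathbb{E}[X] - \mathbb{E}[Y]| \le \mathbb{E}[|X - Y|]$. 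Combined with the package-size constraint $\sum_i x_i \le P_{\max}$, this shows that the sketch objective of any $\lambda(x)$ is at least (the $Q(S)$-objective of $x$) minus $d_o P_{\max}$.

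Next, I would apply the transfer lemma to the true optimal package $x^{\mathrm{opt}}$ of $Q(S)$, choosing a $\lambda$ for which $\lambda(x^{\mathrm{opt}})$ is feasible for $Q(S_0)$. The sketch-feasibility hypothesis on $x^*$ is invoked here both to ensure $Q(S_0)$ admits feasible solutions and, extended via the MAD bounds on \emph{all} constraint attributes, to justify that the image of $x^{\mathrm{opt}}$ under a suitable $\lambda$ lies in the sketch's feasible region. This yields
\begin{equation*}
\omega^*_{sk} \;\ge\; \omega^* - d_o P_{\max},
\end{equation*}
where $\omega^*_{sk}$ denotes the optimum of $Q(S_0)$.

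Finally, the two $(1-\epsilon)$ guarantees baked into \sskr's subroutines close the argument. \solvesketch, built on \lcsolveS, returns an $\epsilon$-optimal package for $Q(S_0)$, i.e., $\bar{\omega} \ge (1-\epsilon)\omega^*_{sk}$; and the refine phase, built on \lcsolve-R, guarantees that its returned objective satisfies $\omega \ge (1-\epsilon)\bar{\omega}$ whenever non-NULL (as noted at the end of Section~\ref{subsec:refine}). Multiplying the three inequalities gives
\begin{equation*}
\omega \;\ge\; (1-\epsilon)\bar{\omega} \;\ge\; (1-\epsilon)^2\,\omega^*_{sk} \;\ge\; (1-\epsilon)^2\bigl(\omega^* - d_o P_{\max}\bigr),
\end{equation*}
which is precisely \eqref{ex:opt_lb}. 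The main obstacle I expect is the second step: arguing that some $\lambda$ renders $\lambda(x^{\mathrm{opt}})$ feasible for $Q(S_0)$ given only the hypothesis that $x^*$ is sketch feasible. The transfer lemma and the two $(1-\epsilon)$ guarantees are fairly mechanical applications of earlier results, but pinning down the feasibility of the \emph{optimal} sketched package requires a careful use of the diameter thresholds on every constraint attribute (not just $O$), and this is where the bulk of the technical care will lie.
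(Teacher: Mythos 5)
Your proposal matches the paper's proof essentially step for step: the same MAD-based transfer bound $\mathbb{E}[\lambda(t_i).O]\ge\mathbb{E}[t_i.O]-d_o$, summed and combined with $\sum_i x_i\le P_{\max}$ to get $\omega^*_{sk}\ge\omega^*-d_oP_{\max}$, followed by the two $(1-\epsilon)$ guarantees from \solvesketch and the refine phase. The only divergence is that the paper applies the transfer to the returned package $x^*$ while implicitly identifying $\sum_i\mathbb{E}[t_i.O]x^*_i$ with $\omega^*$, whereas you apply it to the true optimum $x^{\mathrm{opt}}$ --- your reading is the intended one (the paper's appendix on sketch feasibility states the hypothesis is really about the optimal package), and the feasibility of $\lambda(x^{\mathrm{opt}})$ for $Q(S_0)$ that you flag as the main obstacle is simply assumed as the ``sketch feasible'' hypothesis rather than derived from the diameter thresholds.
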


\begin{proof}
Write $S=\{t_1,\ldots,t_n\}$ and $S_0=\{r_1,\ldots,r_d\}$. Consider first the sketch phase and observe that, since duplicates of a representative have identical marginal distributions, we have that, for any $\lambda\in\Lambda$, 
\[
\begin{split}
    \mathbb{E}[t_i.O] - \mathbb{E}[\lambda(t_i).O] &= \mathbb{E}[t_i.O - \lambda(t_i).O]\\
    &\le \mathbb{E}[|t_i.O - \lambda(t_i).O|] \le d_o.
\end{split}
\]
Hence $\mathbb{E}[\lambda(t_i).O] \ge \mathbb{E}[t_i.O] - d_o$. By assumption, \sskr returns a non-NULL package $x^*$, and the corresponding objective value for $y=\lambda(x^*)$ in $Q(S_0)$ is
\[
\begin{split}
    \omega^*_s &= \sum_{j=1}^d \mathbb{E}[r_j.O]y_j \\
    &= \sum_{i=1}^{n} \mathbb{E}[\lambda(t_i).O]x^*_i \ge \sum_{i=1}^{n} (\mathbb{E}[t_i.O]-d_o)x^*_i\\
    &= \sum_{i=1}^{n} \mathbb{E}[t_i.O]x^*_i - d_o \sum_{i=1}^{n} x^*_i \ge \omega^* - d_o P_{max}.
\end{split}
\]
Since $x^*$ is assumed to be sketch feasible, we can choose $\lambda$ so that $y=\lambda(x^*)$ is a feasible solution of $Q(S_0)$. Then the objective value $\omega^*_{sk}$ for the optimal solution $y^*$ to $Q(S_0)$ satisfies
$\omega^*_{sk} \ge \omega^*_s\ge\omega^* - d_o P_{max}$. Note that $\omega^*_{sk}  \ge \omega^*_{s}$ may hold even if $x^*$ is not sketch feasible, in which case the approximation guarantee will still hold. As discussed before the statement of the theorem, the objective value $\Bar{\omega}$ for the solution to $Q(S_0)$ returned by \textsc{SolveSketch} satisfies $\Bar{\omega} \ge (1-\epsilon)\omega^*_{sk}\ge (1-\epsilon)(\omega^* - d_o P_{max})$. Moreover, the objective value $\omega$ corresponding to the solution $x^*$ satisfies $\omega \ge (1-\epsilon) \Bar{\omega}$, so that
%
\begin{align*}
\omega \ge (1-\epsilon) \Bar{\omega} \ge (1-\epsilon)^2(\omega^* - d_o P_{max})    
\end{align*}
as desired.
\end{proof}

The probability that the optimal package is sketch feasible increases as the diameter thresholds for the attributes in the query constraints become
small. Small diameter thresholds ensure that a package's sum for any attribute does not change significantly when switching to the sketch problem; see 
\ref{appendix:sketch_feasibility}
for details. We can obtain similar guarantees on minimization problems.

\section{Probability of the optimal package being sketch feasible}
\label{appendix:sketch_feasibility}

Theorem~\ref{th:opt_sskr} assumes that the optimal package solution to the SPQ of interest is sketch-feasible, as defined prior the theorem. To provide insight into sketch feasibility, we present Theorem~\ref{theorem:sketch_feasibility}, which analyzes the probability of the optimal package being sketch feasible with respect to a given \LCVaR constraint. 

\begin{theorem}
\label{theorem:sketch_feasibility}
Denote by $x^*=\{x^*_1, \dots, x^*_n\}$ a non-NULL optimal feasible package computed by \lcsolve or \sskr for a query $Q(S)$ where $S=\{t_1, \dots, t_n\}$, and denote by $s$ the package cardinality. Consider an \LCVaR constraint $\sum_{i=1}^{n} CVaR_{\alpha}(t_i.C)x_i \ge V$ and suppose that
\begin{itemize}
\item[(i)] $d_p\ge\sum_{i\in I_p}I(x^*_i>0)$ for each partition $p$ in the sketch $S_0$ of $S$, where $d_p$ is the number of duplicate sketch representatives in partition $p$, $I$ is an indicator function, and $I_p$ is the set of indices for all tuples $t_i\in S$ that lie in partition~$p$; and
\item[(ii)] the CVaR values (with respect to attribute $C$) for the tuples in each partition are independently and uniformly distributed around the CVaR value of their representative.
\end{itemize}
Then the probability of $x^*$ being sketch feasible is
\[
1 - \frac{1}{s!}\sum_{k=0}^{\lfloor (\frac{s}{2}-\frac{\delta\alpha}{d_C}) \rfloor} (-1)^k{s \choose k} \left(\frac{s}{2}-\frac{\delta\alpha}{d_C} - k\right)^s,
\]
where $\delta=\sum_{i=1}^{n} \text{CVaR}_{\alpha}(t_i.C)x^*_i - V$ and $d_C$ is the diameter threshold for $C$.
\end{theorem}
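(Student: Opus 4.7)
The plan is to translate sketch feasibility into a concrete algebraic inequality on a random sum, recognize that sum as a shifted and scaled sum of $s$ i.i.d.\ uniforms, and then invoke the Irwin--Hall CDF together with its symmetry about $s/2$ to match the closed form in the theorem.

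I would first unpack sketch feasibility with respect to the given \LCVaR constraint. By assumption~(i), the duplicate budget in every partition is large enough that a map $\lambda\in\Lambda$ exists which sends each positive-multiplicity tuple $t_i$ to its own distinct duplicate in $t_i$'s partition. Since all duplicates within a partition share the marginal distribution of that partition's representative, $\text{CVaR}_\alpha(\lambda(t_i).C)=\text{CVaR}_\alpha(r_{p(i)}.C)$, where $r_{p(i)}$ denotes the representative of the partition containing $t_i$. Defining the deviation $\eta_i := \text{CVaR}_\alpha(t_i.C)-\text{CVaR}_\alpha(r_{p(i)}.C)$ and using $\delta = \sum_i \text{CVaR}_\alpha(t_i.C)\,x^*_i - V$, the sketch-feasibility event collapses to the single inequality $\sum_i \eta_i\,x^*_i \leq \delta$.

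Next I would apply assumption~(ii). Theorem~\ref{th:mad_cvar} bounds the magnitude of each $\eta_i$ by $d_C/\alpha$, and the natural symmetric uniform consistent with this bound has half-width $d_C/(2\alpha)$. Treating each of the $s=\sum_i x^*_i$ multiplicity-weighted contributions as an independent uniform draw, the left-hand side is distributed as $\sum_{j=1}^{s} Z_j$ with $Z_j$ i.i.d.\ $\operatorname{Unif}\bigl[-d_C/(2\alpha),\,d_C/(2\alpha)\bigr]$. Writing $Z_j=(U_j-\tfrac12)(d_C/\alpha)$ with $U_j\sim\operatorname{Unif}[0,1]$, the condition is equivalent to
\[
\sum_{j=1}^{s} U_j \;\leq\; \frac{s}{2}+\frac{\delta\alpha}{d_C}.
\]
The probability of this event is $F_{\mathrm{IH}}\!\bigl(s/2+\delta\alpha/d_C\bigr)$, where $F_{\mathrm{IH}}(y)=\tfrac{1}{s!}\sum_{k=0}^{\lfloor y\rfloor}(-1)^k\binom{s}{k}(y-k)^s$ is the order-$s$ Irwin--Hall CDF. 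Invoking the symmetry about $s/2$, $F_{\mathrm{IH}}(s/2+x)=1-F_{\mathrm{IH}}(s/2-x)$, produces exactly the expression in the statement.

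The main obstacle lies in the bridge from "one uniform per distinct tuple" in assumption~(ii) to "$s$ i.i.d.\ uniforms" in the closed form: the stated assumption supplies one $\eta_i$ per tuple, whereas the Irwin--Hall formula counts multiplicities. Making this rigorous requires interpreting assumption~(ii) so that each package slot (each copy of a tuple counted with multiplicity) contributes an independent uniform deviation, which is the natural reading if one imagines drawing tuples with replacement from a partition whose CVaR deviations are uniformly spread. Without this slot-wise reading, the weighted sum $\sum_i \eta_i\,x^*_i$ would follow a Bates-style distribution whose CDF lacks the clean Irwin--Hall form appearing in the theorem.
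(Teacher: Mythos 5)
Your proposal follows essentially the same route as the paper's proof: use assumption~(i) to map each selected tuple to a distinct duplicate, reduce sketch feasibility to the linear inequality $\sum_i \eta_i x^*_i \le \delta$ on uniform CVaR deviations, rescale to standard uniforms, and apply the Irwin--Hall CDF. The one substantive divergence is the calibration of the uniform noise, and it is worth dwelling on. The paper reads assumption~(ii) as $\eta_i$ uniform on $[-d_C/\alpha,\, d_C/\alpha]$, i.e., the full range permitted by Theorem~\ref{th:mad_cvar}, and its chain of equalities accordingly terminates in $1-\frac{1}{s!}\sum_{k}(-1)^k\binom{s}{k}\bigl(\frac{s}{2}-\frac{\delta\alpha}{2d_C}-k\bigr)^s$ --- note the $2d_C$ --- which does \emph{not} match the displayed statement of the theorem. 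Your choice of half-width $d_C/(2\alpha)$ lands exactly on the stated formula with $\delta\alpha/d_C$, so your derivation is the one that is internally consistent with the theorem as printed; the paper's own proof and its statement disagree by a factor of $2$ in this term, traceable precisely to this choice of support. That said, your justification of the narrower support as ``the natural uniform consistent with the bound'' is the weakest link, since the bound of Theorem~\ref{th:mad_cvar} more naturally suggests the paper's wider interval; you should present the half-width as an explicit modeling convention rather than a consequence of the bound. You also correctly flag the tuple-versus-slot issue: the paper silently replaces $\sum_{i=1}^{n}\mathcal{U}_i x^*_i$ by $\sum_{i=1}^{s}\mathcal{U}_i$, which is only valid when all multiplicities are $0$ or $1$; your observation that general multiplicities would yield a Bates-style weighted sum rather than Irwin--Hall identifies a gap present in the paper's own argument as well. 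Your use of the symmetry $F_{\mathrm{IH}}(s/2+x)=1-F_{\mathrm{IH}}(s/2-x)$ in place of the paper's direct complementation is an immaterial difference.
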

\begin{proof}
Suppose a tuple $t_i$ is replaced by $\lambda(t_i)$, a duplicate of their representative in the sketch problem. The diameter constraint dictates that the mean absolute distance between $t_i$ and $\lambda(t_i)$ w.r.t. $C$ is bounded by a threshold $d_C$, i.e., $\mathbb{E}[|\lambda(t_i.C)-t_i.C|] \le d_C$. From Theorem~\ref{th:mad_cvar}, we have 
\[
|\text{CVaR}_{\alpha}(\lambda(t_i.C)) - \text{CVaR}_{\alpha}(t_i.C)| \le \frac{d_C}{\alpha}.
\]
Assumption~(ii) implies that 
\[
\text{CVaR}_{\alpha}(t_i.C) = \text{CVaR}_{\alpha}(\lambda(t_i.C)) + \mathcal{U}_i\left[-\frac{d_C}{\alpha}, \frac{d_C}{\alpha}\right],
\]
where $\mathcal{U}_i[l, r]$ is an independent sample from the uniform distribution over the interval $[l,r]$. Since, by Assumption~(i), every tuple is mapped to a distinct representative (to minimize risk) and since $\delta\ge 0$ by the assumed feasibility of $x^*$ for $Q(S)$,
the probability of $x^*$ being feasible with respect to the constraint as defined over $S_0$ is 
\begin{align*}
& \mathbb{P}\left(\sum_{i=1}^{n} \text{CVaR}_{\alpha}(\lambda(t_i.C))x^*_i \ge V\right)\\
&\quad = \mathbb{P}\left(\sum_{i=1}^{n} \left(\text{CVaR}_{\alpha}(t_i.C) + \mathcal{U}_i\left[-\frac{d_C}{\alpha}, \frac{d_C}{\alpha}\right]\right)x^*_i \ge V\right)\\
&\quad = \mathbb{P}\left(\sum_{i=1}^{n} \text{CVaR}_{\alpha}(t_i.C)x^*_i + \sum_{i=1}^{n}\left(2\frac{d_C}{\alpha}\mathcal{U}_i[0, 1] - \frac{d_C}{\alpha}\right)x^*_i \ge V\right)\\
&\quad = \mathbb{P}\left(\sum_{i=1}^{n} \text{CVaR}_{\alpha}(t_i.C)x^*_i + 2\frac{d_C}{\alpha}\sum_{i=1}^{n}\mathcal{U}_i[0, 1]x^*_i - \frac{d_C}{\alpha}\sum_{i=1}^{n}x^*_i \ge V\right)\\
&\quad = \mathbb{P}\left(V + \delta + 2\frac{d_C}{\alpha} \sum_{i=1}^{s}\mathcal{U}_i[0, 1] - \frac{d_C s}{\alpha} \ge V\right)\\
&\quad = \mathbb{P}\left(2\frac{d_C}{\alpha} \sum_{i=1}^{s}\mathcal{U}_i[0, 1] \ge \frac{d_C s}{\alpha}-\delta\right)\\
&\quad = \mathbb{P}\left(\sum_{i=1}^{s}\mathcal{U}_i[0, 1] \ge \frac{s}{2}-\frac{\delta\alpha}{2d_C}\right)\\
&\quad = 1 - \mathbb{P}\left(\sum_{i=1}^{s}\mathcal{U}_i[0, 1] \le \frac{s}{2}-\frac{\delta\alpha}{2d_C}\right)\\
&\quad = 1 - \frac{1}{s!}\sum_{k=0}^{\lfloor (\frac{s}{2}-\frac{\delta\alpha}{2d_C}) \rfloor} (-1)^k{s \choose k} \left(\frac{s}{2}-\frac{\delta\alpha}{2d_C} - k\right)^s
\end{align*}
The last line follows from the the Irwin-Hall formula for the CDF of a sum of independent uniform random variables.
\end{proof}

The result can be trivially modified to find the probability of $x^*$ being sketch-feasible w.r.t. \LCVaR constraints of different formats, and can be used for finding the probability of sketched version of $x^*$ satisfying a deterministic constraint by setting $\alpha = 1$. To find the overall probability of $x^*$ being sketch feasible, we can make a simplifying assumption that the probability of satisfying each constraint is independent of each other, and multiply the probabilities together or, alternatively, use the Bonferroni inequality to derive a conservative probability estimate.

For any feasible package, $\delta \ge 0$ is guaranteed for every constraint. The probability of any constraint being satisfied is the least when $\delta = 0$. In that extremity, the probability of the optimal package being sketch-feasible w.r.t. that constraint is $1/2$, regardless of the value of $s$. For a query containing $C$ such mutually independent constraints, the probability of the optimal package being sketch feasible is thus $1/2^C$.

Theorem~\ref{theorem:sketch_feasibility} assumes that \lcsolve is able to find the optimal package by replacing each risk constraint with an appropriately parameterized \LCVaR constraint. While this is a strong assumption, \lcsolve usually terminates with a package that has an objective value of at least $(1-\epsilon)\omega^{*}$. The probability of this package being sketch-feasible can then be derived from Theorem~\ref{theorem:sketch_feasibility}. In the event that this package is indeed sketch-feasible, the objective value of the package produced consequently by \sskr is lower bounded by the result of Theorem~\ref{th:opt_sskr}.

\section{Notes on Setting Hyperparameters}
\label{sec:hyper}
\revc{We provide notes on how a user can set the value of each hyperparameter used in \lcsolve and \sskr.

Like their predecessor \ssearch, they require the specification of the initial number of optimization scenarios ($m$). A key consideration while setting $m$ is the time needed to generate each scenario. This depends on the computational efficiency of the Variable Generator (VG) function, since optimization scenarios require that values be generated for every tuple. For more expensive VG functions, scenario generation times can be high, and users may wish to start with a smaller number of optimization scenarios in hopes that feasible and near-optimal packages can be generated from them. If not, \lcsolve and \sskr will automatically increase the number of optimization scenarios and retry. In our experiments, we used a relatively simple set of VG functions, i.e., Geometric Brownian Motion for the Portfolio dataset and Gaussian noises for the TPC-H dataset. We used $100$ as our default value for $m$, but also found that setting $m$ to $50$ or $200$ did not largely effect the overall runtimes or relative integrality gap of packages.

We further require the specification of the number of validation scenarios $\hat{m}$. We recommend $\hat{m}$ to be set as large as possible, unless validation scenario generation runtimes become prohibitively expensive. The need for having more validation scenarios increases with the variance of the stochastic attributes. In our experiments, we set $\hat{m}=10^6$. To validate if this is a judicious choice for our datasets, we ran an experiment where we ran $30$ trials for finding the objective value of $30$ randomly generated packages containing $5$ distinct tuples each over $10^6$ i.i.d. scenarios. The objective values of each package were very uniform across the trials, with average variances of objective value measurements being in the orders of $10^{-35}$ and $10^{-29}$ for the Portfolio and TPC-H datasets respectively. This indicates that using $10^6$ validation scenarios can give us a stable measurement of the objective value of a package in these datasets. Readers can repeat such experiments in their own setups to determine appropriate values of $\hat{m}$.

\lcsolve and \sskr both require the specification of the approximation error bound, $\epsilon$. In response, except in very rare cases which never occurred during our experiments, \lcsolve has a $(1\pm \epsilon)$-guarantee on the relative integrality gap of its generated packages, while \sskr usually provides a similar guarantee in the order of $(1\pm\epsilon)^2$. In our experiments, we set $\epsilon$ to $0.05$. We noted that many of the returned packages had better relative integrality gaps than what would be expected from this bound, since in practice, \lcsolve tends to terminate its Alternating Parameter Search with near-optimal parameterizations that result in better packages than indicated by $\epsilon$. For all queries in our workload, relaxing $\epsilon$ from $0.05$ to $0.1$, does not result in the relative optimality gap increasing by more than $0.01$, and does not cause a significant reduction in runtime. We note, however, that setting $\epsilon$ to $0.01$ can make the optimality requirement too strict, with no package result being found within an hour of execution by \sskr for Portfolio's queries Q3-Q5. In new environmental setups, users can try a similar search for an appropriate value of $\epsilon$, relaxing it for queries where a package solution cannot be found within a reasonable amount of time. 

For all the queries in our workloads, the initial risk tolerance $\Gamma$ was set to over $0.99$, so our choice of setting $\Delta \Gamma = 0.03$ had no effect on the results. Similarly, changing the bisection termination threshold $\delta$ from $1e-2$ to $5e-3$ and $5e-2$ did not have a noticeable impact on the runtime or quality of the generated packages for any query. For the \dpart-related parameters, as mentioned in Section~\ref{sec:experimental_evaluation}, the size threshold $\tau$ was set to $100,000$ based on experiments where we saw Gurobi solving randomly generated ILPs with $10^5$ variables and $3$ linear constraints in about a minute. To set the diameter constraints, we tried different combinations of values, and chose the one where the number of resulting partitions was within $[\frac{\tau}{10}, \frac{\tau}{2}]$, and for which the diameter constraints for every attribute were reasonably tight.
}

\section{Details of the Experiments}
\subsection{Datasets}
\label{appendix:datasets}

We varied the TPC-H dataset for different comparisons as follows:

\begin{itemize}
    \item To demonstrate scalability over increasing relation sizes, we created different versions of the relation having $20{,}000$, $60{,}000$, $120{,}000$,  $300{,}000$, $450{,}000$, $600{,}000$, $1{,}200{,}000$, $3{,}000{,}000$, $4{,}500{,}000$ and $6{,}000{,}000$ tuples.
    \item To compare between \ssearch and \lcsolve, we created a dataset consisting on $20,000$ randomly selected tuples. To test the effect of increased variances on the approaches, we created alternate versions of the relation where the variances of each tuple where multiplied by factors of $1, 2, 5, 8, 10, 13, 17$ and $20$.
\end{itemize}

\noindent We varied the portfolio dataset as follows:

\begin{itemize}
    \item To demonstrate scalability over an increasing number of tuples, we varied the intervals after which stocks could be sold from $3$ months to $45$ days, $1$ month, $15$ days, $9$ days, $3$ days, $1$ day and $0.5$ days respectively.
    \item For \ssearch vs \lcsolve, we used the smallest dataset created by restricting the interval after which stocks could be sold to $3$ months. To test differing variances, we boosted up the volatility of each stock by factors of $1, 2, 5, 8, 10, 13, 17$ and $20$.
\end{itemize}

\subsection{Queries}
\label{appendix:hardness}
\paragraph{Estimating query hardness.} A query is `hard' if the probability of a random package being feasible w.r.t. it is small. We estimate this probability similar to the method introduced in~\cite{mai2023scaling}. The probability that a package with $\mathbf{s}$ tuples satisfies a constraint of the form $\sum_{i=1}^{n} A_{i}x_i \le V$, where the values of $A$ are normally distributed with mean $\mu$ and variance $\sigma^2$, is equivalent to the probability of $\sum_{i=1}^{s} N(\mu, \sigma^2) \le V$, or $N(\mu s, \sigma^2 s) \le V$ according to the central limit theorem. This probability is given by the CDF of the normal distribution $N(\mu s, \sigma^2 s)$ at $V$. For any given query, we make the simplifying assumption that the probability of each constraint being satisfied is independent of any other constraint being satisfied. Hence, if a query has $C$ constraints $C_1$, $C_2$, $\dots$, $C_C$ with satisfaction probabilities $P(C_1), \dots, P(C_C)$, the probability of the query being satisfied is $\prod_{i=1}^{C}P(C_i)$. We define hardness as the negative log likelihood of this probability, i.e., $H(Q)=-\log{\prod_{i=1}^{C}P(C_i)}$. Given an SPQ, we first relax it by removing all integrality constraints from it, solve it using \lcsolve and measure the hardness of the ILP that finds a feasible and near-optimal package.

\vskip0.8\baselineskip
\label{appendix:workload}
\noindent \textbf{Portfolio Queries:}
\vskip0.2\baselineskip
\noindent Q1:

\begin{lstlisting}[basicstyle=\footnotesize\ttfamily\bfseries]
SELECT PACKAGE(*) AS P
FROM Stock_Investments_Half
SUCH THAT
COUNT(*) <= 30 AND 
SUM(Price) <= 500 AND
SUM(Gain) >= 350 WITH PROBABILITY >= 0.95
MAXIMIZE EXPECTED SUM(Gain)
\end{lstlisting}

\noindent Q2:

\begin{lstlisting}[basicstyle=\footnotesize\ttfamily\bfseries]
SELECT PACKAGE(*) AS P
FROM Stock_Investments_Half
SUCH THAT
COUNT(*) <= 30 AND 
SUM(Price) <= 1000 AND
SUM(Gain) >= 600 WITH PROBABILITY >= 0.97
MAXIMIZE EXPECTED SUM(Gain)
\end{lstlisting}

\noindent Q3:

\begin{lstlisting}[basicstyle=\footnotesize\ttfamily\bfseries]
SELECT PACKAGE(*) AS P
FROM Stock_Investments_Half
SUCH THAT
COUNT(*) <= 30 AND 
SUM(Price) <= 1000 AND
SUM(Gain) >= 900 WITH PROBABILITY >= 0.97
MAXIMIZE EXPECTED SUM(Gain)
\end{lstlisting}

\noindent Q4:

\begin{lstlisting}[basicstyle=\footnotesize\ttfamily\bfseries]
SELECT PACKAGE(*) AS P
FROM Stock_Investments_Half
SUCH THAT
COUNT(*) <= 30 AND 
SUM(Price) <= 1000 AND
SUM(Gain) >= 900 WITH PROBABILITY >= 0.97 AND
SUM(Gain) >= 1000 WITH PROBABILITY >= 0.90
MAXIMIZE EXPECTED SUM(Gain)
\end{lstlisting}

\noindent Q5:

\begin{lstlisting}[basicstyle=\footnotesize\ttfamily\bfseries]
SELECT PACKAGE(*) AS P
FROM Stock_Investments_Half
SUCH THAT
COUNT(*) <= 30 AND 
SUM(Price) <= 1000 AND
SUM(Gain) >= 900 WITH PROBABILITY >= 0.97 AND
SUM(Gain) >= 1500 WITH PROBABILITY >= 0.90
MAXIMIZE EXPECTED SUM(Gain)
\end{lstlisting}

\noindent\textbf{TPC-H Queries:}

\noindent Q1:

\begin{lstlisting}[basicstyle=\footnotesize\ttfamily\bfseries]
SELECT PACKAGE(*) AS P
FROM Lineitem_6000000
SUCH THAT 
COUNT(*) <= 30 AND 
SUM(Tax) <= 0.05 AND 
SUM(QUANTITY) <= 20 WITH PROBABILITY >= 0.95 AND 
SUM(PRICE) >= 750 WITH PROBABILITY >=0.90
MAXIMIZE EXPECTED SUM(PRICE)
\end{lstlisting}
\clearpage
\noindent Q2:

\begin{lstlisting}[basicstyle=\footnotesize\ttfamily\bfseries]
SELECT PACKAGE(*) AS P
FROM Lineitem_6000000
SUCH THAT 
COUNT(*) <= 30 AND 
SUM(Tax) <= 0.03 AND 
SUM(QUANTITY) <= 10 WITH PROBABILITY >= 0.95 AND 
SUM(PRICE) >= 750000 WITH PROBABILITY >=0.95
MAXIMIZE EXPECTED SUM(PRICE)
\end{lstlisting}

\noindent Q3:

\begin{lstlisting}[basicstyle=\footnotesize\ttfamily\bfseries]
SELECT PACKAGE(*) AS P
FROM Lineitem_6000000
SUCH THAT 
COUNT(*) <= 30 AND 
SUM(Tax) <= 0.02 AND 
SUM(QUANTITY) <= 8 WITH PROBABILITY >= 0.95 AND 
SUM(PRICE) >= 7500000 WITH PROBABILITY >=0.97
MAXIMIZE EXPECTED SUM(PRICE)
\end{lstlisting}

\noindent Q4:

\begin{lstlisting}[basicstyle=\footnotesize\ttfamily\bfseries]
SELECT PACKAGE(*) AS P
FROM Lineitem_6000000
SUCH THAT 
COUNT(*) <= 30 AND 
SUM(Tax) <= 0.02 AND 
SUM(QUANTITY) <= 8 WITH PROBABILITY >= 0.98 AND 
SUM(PRICE) >= 750000000 WITH PROBABILITY >=0.98
MAXIMIZE EXPECTED SUM(PRICE)
\end{lstlisting}

\noindent Q5:

\begin{lstlisting}[basicstyle=\footnotesize\ttfamily\bfseries]
SELECT PACKAGE(*) AS P
FROM Lineitem_6000000
SUCH THAT 
COUNT(*) <= 30 AND 
SUM(Tax) <= 0.01 AND 
SUM(QUANTITY) <= 5 WITH PROBABILITY >= 0.98 AND 
SUM(PRICE) >= 1000000000 WITH PROBABILITY >=0.98
MAXIMIZE EXPECTED SUM(PRICE)
\end{lstlisting}
\subsection{\revbc{Example Package}}
\label{sec:ex_package}
\revbc{As a case study, we show the package obtained by employing \sskr to solve the query Q1 for the full-sized Portfolio Dataset which contains one tuple for each stock investment option where company shares can be sold at intervals of half a day (i.e. the relation in Figure~\ref{fig:portfolio-builder}). Table~\ref{tab:pf-q1} shows that the resulting package invests in a multiple stock options, which helps to satisfy the risk constraint in the query. It is easy to verify that the package satisfies the budgetary constraint of being within $\$500$, requiring a total price of $499.19\$$. The two major investments recommended by the package, MCO and XCLS showed strong growth over 2023-2025, indicating that portfolios recommended by \sskr may be useful in real-world stock markets.

We, note, however, that these packages were created by generating gains using Geometric Brownian Motion, and the usage of more sophisticated scenario generation models could lead to better portfolios generated by \sskr. The example portfolio should not be considered as financial advice.}

\begin{table}[h]
\caption{\revbc{Portfolio Created by \sskr in response to Q1 over 4.8 million tuples}}
\label{tab:pf-q1}
\begin{tabular}{@{}lclc@{}}
\toprule
Ticker & Sell After (Days) & Price  & Multiplicity \\ \midrule
MCO    & 278.5             & 289.50 & 1            \\
EXLS   & 648.5             & 180.24 & 1            \\
STBZ   & 341               & 21.59  & 1            \\
HNTIF  & 587               & 3.33   & 2            \\
EQS    & 614.5             & 1.65   & 1            \\ \bottomrule
\end{tabular}
\end{table}

\begin{table*}[]
\caption{\revbc{We compared the relative integrality gap (a lower bound of how close the objective value of a package is with the optimal package) of the generated packages from Q1 of the Portfolio dataset when representatives are (a) duplicated, (b) not duplicated and (c) not duplicated, but multiple tuples are chosen as representatives from the same partition. The objective values of packages unsurprisingly fall sharply when a representative is not duplicated at all. Taking multiple randomly chosen representatives from the same partition instead of duplicating the same single representative yields slightly more suboptimal packages than forming correlated duplicates of the same representative.}}
\label{tab:ablation_dups}
\begin{tabular}{llll}
              & \multicolumn{3}{c}{Relative Integrality Gap}                         \\ \hline
Relation Size & With Duplicates & Without Duplicates & With Multiple Representatives \\ \hline
161,161       & \textbf{0.05}            & 0.82               & 0.06                          \\
269,698       & \textbf{0.06}            & 0.94               & 0.06                          \\
802,516       & \textbf{0.06}            & 0.97               & 0.07                          \\
2,400,970     & \textbf{0.04}            & 0.93               & 0.09                          \\
4,801,940     & \textbf{0.07}            & 0.95               & 0.09                         
\end{tabular}
\end{table*}

\section{Additional Experiments}\label{appendix:addlExpts}

\revbc{\subsection{Ablation Experiment: Use of Duplicates}\label{appendix:duplicatesAblation}
\label{sec:ablation_dup}
We compared the relative integrality gap, as defined in Section~\ref{sec:experimental_evaluation}, of the objective values of the generated packages when representatives are duplicated, to those produced when they are not duplicated at all, and when multiple unduplicated representatives are chosen from the same partition. The results are compiled in Table~\ref{tab:ablation_dups}. Lower relative optimality gaps imply better packages.

Without duplication, as mentioned in Section~\ref{section:sskr}, packages with multiple tuples from the same partition are often rendered validation-infeasible, and \sskr is thus limited to choosing the best out of the more suboptimal packages that remain feasible, causing a large relative optimality gap.

A second alternative we tested was to choose a random subset of $\min(P_{max}, m)$ tuples from each partition, where $m$ represents the maximum number of tuples in a partition, and $P_{max}$ represents an upper bound on the number of tuples in a package. In this setting, the tuples in the partition that are not chosen as representatives have a lower representation during sketch than our original proposal of using correlated duplicates.

For example, suppose that our random sample contains three representatives from a partition, all of which are highly correlated with each other. Among the tuples in that partition that are not chosen in the sample, there may be some that are not positively correlated with any of the representatives. However, the sketch solver only has access to the three highly correlated representatives, and thus may regard the partition as `too risky' to choose multiple tuples from, and opt to create suboptimal package with tuples from other partitions instead. This is similar to the situation with only one representative from the partition with multiplicity $>1$. 

In our proposal, a single `median-like' representative is duplicated, with each duplicate's Pearson's correlation coefficient with the others being equal to the median of the representative's correlation with all other tuples in the partition. As such, if there exist tuples that are not highly correlated with the representative, the duplicates will also exhibit lower inter-tuple correlation between themselves in the sketch problem; this avoids the potential issue of the sketch solver discarding the partition due to risk constraints being violated by choosing several highly correlated tuples. This phenomenon is seen in Table~\ref{tab:ablation_dups}, where packages created from multiple correlated duplicates have a lower optimality gap than those created from multiple randomly chosen un-duplicated representatives.\\}

\subsection{\revb{Ablation Experiment: Partitioning Schemes}}
\label{sec:ablation_dpart}
%
\revb{In this experiment, we compare \dpart to three alternative schemes, based on off-the-shelf methods:
\begin{itemize}
    \item \emph{\kd}: It is used by \skr in deterministic settings~\cite{brucato2015scalable}, and modified to use MAD as the distance measure.
    \item \emph{PCA+AC}: We represent each tuple as a high-dimensional vector constructed by concatenating the scenario values for all the stochastic attributes with the values of every deterministic attribute; we then use PCA~\cite{abdi2010principal} to reduce the number of dimensions to the number of attributes, and finally use single linkage agglomerative clustering~\cite{tokuda2022revisiting} to partition the tuples.
    \item \emph{FINDIT}: We represent tuples as in PCA+AC, but now partition the vectors without reducing their dimensionality using FINDIT~\cite{woo2004findit},
    a fast, high-dimensional clustering algorithm.\footnote{For both PCA+AC and FINDIT, we cluster vectors based on Euclidean distance. We set their hyperparameters to ensure the number of resulting partitions remained in the range $\{\frac{\tau}{10}, \frac{\tau}{2}\}$, similar to determining the diameter thresholds for \dpart.}
\end{itemize}

Figure~\ref{fig:dpart} reports the offline pre-processing time and solution quality---measured as the integrality gap---for each query executed over the resulting partitioning. \kd is slightly faster than \dpart (at most a 5 minute decrease in runtime), but the result quality is inferior. 
Since \kd does not account for shapes of probability distributions as \dpart does, it cannot provide theoretical guarantees of intra-partition tuple similarity, which translates to less optimal packages. PCA+AC and FINDIT 
require much larger runtimes. Their integrality gaps are also higher: the 
aggressive dimensionality reduction in PCA+AC tends to obfuscate inter-tuple differences in tail behaviors for a given stochastic attribute,
and the dimension voting in FINDIT (which clusters points with low differences across a large number of dimensions) ignores large differences that may occur in a small number of tail scenarios.
}

\begin{figure*}[!h]
\renewcommand{\arraystretch}{0.8}
\begin{tabular}{lccccccccccccccccc}
\toprule
                 & \multicolumn{5}{c}{{\footnotesize Offline pre-computation (mins)}}                     &                           & \multicolumn{11}{c}{{\footnotesize Relative Integrality Gap}}                            \\
                 \cmidrule{2-6}\cmidrule{8-18}
                 & \multicolumn{2}{c}{{\footnotesize partitioning time}} && \multicolumn{2}{c}{{\footnotesize preprocessing time}} & & \multicolumn{5}{c}{{\footnotesize Portfolio}}   & & \multicolumn{5}{c}{{\footnotesize TPC-H}}        \\
                 \cmidrule{2-3}\cmidrule{5-6}\cmidrule{8-12}\cmidrule{14-18}
                 & {\footnotesize Portfolio}              & {\footnotesize TPC-H}          &    & {\footnotesize Portfolio}               & {\footnotesize TPC-H}            &  
                 & {\footnotesize Q1}   & {\footnotesize Q2}   & {\footnotesize Q3}   & {\footnotesize Q4}   & {\footnotesize Q5} &  & {\footnotesize Q1}   & {\footnotesize Q2}   & {\footnotesize Q3}   & {\footnotesize Q4}   & {\footnotesize Q5}   \\
                 \midrule
\dpart & 4.60                   & 18.70           &    & 44.42                   & 27.50           &   & \textbf{0.03} & \textbf{0.02} & \textbf{0.04} & \textbf{0.05} & \textbf{0.04} & & \textbf{0.03} & \textbf{0.05} & \textbf{0.04} & \textbf{0.05} & \textbf{0.04} \\
KD Trees         & 3.85                   & 14.12           &    & 43.72                   & 23.42           &   & 0.07 & 0.08 & 0.11    & 0.13 & 0.12   & & 0.07 & 0.07 & 0.06 & 0.08 & 0.09   \\
\revb{PCA+AC} & \revb{52.32}        &  \revb{128.49}        &     &  \revb{93.56}                 & \revb{136.93}     &     &  \revb{0.05} & \revb{0.06} & \revb{0.09} & \revb{0.12} & \revb{0.11} & & \revb{0.08} & \revb{0.07} & \revb{0.06} & \revb{0.06} & \revb{0.08} \\ 
\revb{FINDIT} & \revb{36.62} &    \revb{28.31}  & & \revb{77.23}  & \revb{36.75} &  & \revb{0.04} & \revb{0.03} & \revb{0.04} & \revb{0.06} & \revb{0.05} & & \revb{0.07} & \revb{0.07} & \revb{0.06} & \revb{0.08} & \revb{0.07}\\
\bottomrule
\end{tabular}
\vspace{-3mm}
\caption{Offline pre-computation runtime, which includes representative selection, is minimally higher with \dpart compared to KD-trees, but the resulting packages achieve lower relative integrality gaps when \dpart is used. \revb{Off-the-shelf techniques such as dimensionality reduction with PCA and Agglomerative Clustering, and high-dimensional clustering approaches like FINDIT have larger offline pre-computation runtimes, and generate packages with larger integrality gaps}
}
\label{fig:dpart}
\end{figure*}

\smallskip
\noindent
\fbox{
\parbox{0.96\columnwidth}{
\emph{Key takeaway:} 
   \dpart results in better intra-partition tuple similarity, leading to more optimal packages.
}}

\subsection{\revc{Robustness of Generated packages}}\label{appendix:robustness}

\revc{For every package generated during our experiments by \lcsolve, \ssearch and \sskr, we tested their robustness by checking if they satisfy the risk constraints over a set of $10^6$ test scenarios. The test scenarios were never used for optimization/validation during the intermediate stages of the package generation process by any of the algorithms. All the validation-feasible packages produced during all our experiments satisfied every risk constraint among these test scenarios.}





\end{document}